\numberwithin{equation}{section}
\newtheorem{theorem}{Theorem}[section]
\newtheorem{cor}[theorem]{Corollary}
\newtheorem{lemma}[theorem]{Lemma}
\newtheorem{prop}[theorem]{Proposition}
\theoremstyle{definition}
\newtheorem{defn}[theorem]{Definition}
\newtheorem{ex}[theorem]{Example}
\newtheorem{algorithm}[theorem]{Algorithm}
\newtheorem{remark}[theorem]{Remark}
\newcommand{\N}{{\mathbb N}}
\newcommand{\F}{{\mathbb F}}
\newcommand{\Z}{{\mathbb Z}}
\newcommand{\R}{{\mathbb R}}
\newcommand{\AND}{{\rm AND}}
\newcommand{\OR}{{\rm OR}}
\newcommand{\twoby}[4]{\begin{pmatrix}#1&#2\\#3&#4\end{pmatrix}}
\newcommand{\rankSum}{\text{RankSum}}
\newcommand{\sbt}{\raisebox{.2ex}{\mbox{$\scriptscriptstyle\bullet\,$}}}
\newcommand{\rank}{\mbox{${\rm rk}$}}
\renewcommand{\mod}{\mbox{\rm mod}\,}
\newcommand{\wt}{{\rm wt}}
\newcommand{\wtH}{\mbox{{\rm wt}$_{\rm H}$}}
\newcommand{\wtL}{\mbox{{\rm wt}$_{\rm L}$}}
\newcommand{\wtE}{\mbox{{\rm wt}$_{\rm E}$}}
\newcommand{\wtU}{\mbox{{\rm wt}$_{\rm U}$}}
\newcounter{alp}
\newcounter{ara}
\newcounter{rom}
\newenvironment{romanlist}{\begin{list}{(\roman{rom})\hfill}{\usecounter{rom}
     \topsep0ex \labelwidth.7cm \leftmargin.7cm \labelsep0cm
     \rightmargin0cm \parsep0ex \itemsep.4ex
     \partopsep1ex}}{\end{list}}
\newenvironment{alphalist}{\begin{list}{(\alph{alp})\hfill}{\usecounter{alp}
     \topsep0ex \labelwidth.6cm \leftmargin.6cm \labelsep0cm
     \rightmargin0cm \parsep0ex \itemsep0ex
     \partopsep0ex}}{\end{list}}
\newenvironment{arabiclist}{\begin{list}{\arabic{ara}.\hfill}{\usecounter{ara}
     \topsep0ex \labelwidth.6cm \leftmargin.6cm \labelsep0cm
     \rightmargin0cm \parsep0ex \itemsep0ex
     \partopsep1.6ex}}{\end{list}}
\begin{document}
\title{Lexicodes over Finite Principal Left Ideal Rings}
\date{\today}
\author{Jared Antrobus$^\ast$ and Heide Gluesing-Luerssen\footnote{HGL was partially supported by the National Science Foundation Grant
  DMS-1210061 and by the grant \#422479 from the Simons Foundation. 
  HGL and JA are with the Department of Mathematics, University of Kentucky, Lexington KY 40506-0027, USA;
\{jantrobus,heide.gl\}@uky.edu.}}

\maketitle

{\bf Abstract:}

Let $R$ be a finite principal left ideal ring.
Via a total ordering of the ring elements and an ordered basis a lexicographic ordering of the module~$R^n$ is produced.
This is used to set up a greedy algorithm that selects vectors for which all linear combination with the previously selected vectors satisfy a
pre-specified selection property and updates the to-be-constructed code to the linear hull of the vectors selected so far.
The output is called a lexicode.
This process was discussed earlier in the literature for fields and chain rings.
In this paper we investigate the properties of such lexicodes over finite principal left ideal rings
and show that the total ordering of the ring elements has to respect containment of ideals
in order for the algorithm to produce meaningful results.
Only then it is guaranteed that the algorithm is exhaustive and thus produces codes that are maximal with respect to inclusion.
It is further illustrated that the output of the algorithm heavily depends on the total ordering and chosen basis.

{\bf Keywords:} Greedy algorithm, lexicodes, principal left ideal rings.

\section{Introduction}
Lexicodes, or lexicographic codes, were first introduced by Levenstein~\cite{Lev60} in 1960 with the goal to construct binary codes with a desired minimal Hamming distance. They are obtained by ordering all binary vectors lexicographically and applying a greedy algorithm that selects the vectors that have at least the desired Hamming distance from all previously selected vectors.
Interestingly, the resulting codes turn out to be linear.
Later in 1986, Conway/Sloane~\cite{CoSl86} generalized the idea to codes over fields of characteristic~$2$.
Focusing primarily on codewords realized as winning positions in game theory, they showed that the resulting lexicodes are always additive,
and they are linear if the field size is $2^{2^k}$ for some $k\geq0$.
Many well-known codes, such as the Hamming codes and the extended binary Golay code, turn out to be lexicodes; for a brief overview see~\cite{CoSl86}.

In all the above cases the vectors of the search space~$\F^n$ are ordered by suitably interpreting them as binary representation of integers.
In 1993, Brualdi/Pless~\cite{BrPl93} generalized the theory to using arbitrary ordered bases of $\F_2^n$ and ordering the space by using the lexicographic
ordering on the coefficient vectors.
Among other things, they proved that the resulting codes are again linear.

In 1997 this result has been further generalized by Van Zanten~\cite{VZ97} by allowing other selection criteria instead of the Hamming distance.
More precisely,  Van Zanten presented the following simple algorithm for constructing codes satisfying some property $P$ over the lexicographically ordered space~$\F_2^n$:
\begin{align*}
  &\text{Denote the vectors selected so far by~$C$.}\\
  &\text{Select the next vector $x$ in the list $\F_2^n$ such that $P[x+y]$ holds true for all $y\in C$.}\\
  &\text{Update $C$ to $C\cup\{x\}$.}
 \end{align*}
As in the earlier cases where the property~$P$ was a desired minimum Hamming distance, it turns out that  the resulting code is linear~\cite{VZ97}.
The result is generalized to codes over fields of characteristic~$2$ and, again, linearity is established if the field is of size $2^{2^k}$ for some~$k$ and
the field elements are ordered suitably.

In 2005, a shift in the construction of lexicodes occurred by imposing linearity of the code via an adjustment of the greedy algorithm.
In~\cite{VZS05} Van Zanten/Suparta considered the search space $\F^n$ for general fields~$\F$ and ordered it into level sets based on an
ordered basis along with some fixed, yet arbitrary, ordering of the field elements.
Choosing a selection property~$P$ on~$\F^n$ that is invariant under scalar multiplication, they set up the following greedy algorithm:
\begin{align*}
  &\text{Denote the vectors selected so far by~$C$.}\\
  &\text{Go to the next level set and find the first vector~$x$ such that $P[x+y]$ is true for all $y\in C$.}\\
  &\text{Update~$C$ to $C+\F x$.}
\end{align*}
The resulting lexicode is clearly linear.
However, in this variant it is not a priori clear whether all added vectors $\alpha x+y,\alpha\in\F,y\in C$, satisfy the selection property.
Fortunately, this is indeed the case as established in~\cite{VZS05}.
Another interesting feature of the algorithm is that each level set is searched only once: if the search is successful the algorithm moves on to the next level set after its update.
It is proved in~\cite{VZS05} that the algorithm is nevertheless exhaustive in that it does not miss any admissible vectors.

In 2014, Guenda et al.~\cite{GGS14} generalize the results from~\cite{VZS05} to codes over commutative chain rings~$R$.
In that case, the selection property for $R^n$ has to be invariant under multiplication by units.
Moreover, the test for $P[x+y]$ in the above algorithm needs to be replaced by $P[\gamma^j x+y]$ for all~$j$, where~$\gamma$ is a
generator of the maximal ideal.

In this paper we revisit the results of~\cite{GGS14} and extend them to codes over finite principal left ideal rings.
In this case a code (of length~$n$) is a left submodule of~$R^n$.
As in~\cite{GGS14} we consider selection properties that are invariant under multiplication by units.
Only this guarantees meaningful results of the greedy algorithm.
The algorithm is essentially as the above one with~$\F$ replaced by~$R$ in the update, and with $P[x+y]$ replaced by $P[\gamma x+y]$,
where~$\gamma$ runs through a set of generators of the nonzero left ideals of~$R$.
While these are the obvious generalizations of the chain ring case, special attention needs to be paid to the ordering of the space~$R^n$.
Again it is based on an ordered basis along with an ordering of the ring elements.
However, the latter one needs to be chosen with care for the greedy algorithm to produce good results.
More precisely, the ordering of the ring has to respect containment of (nonzero) left ideals, see Definition~\ref{def:Respectful}.
Only then it is guaranteed that the algorithm is exhaustive and the resulting codes are maximal within the set of all
codes satisfying the given property.
The exhaustiveness is nontrivial and proven with the aid of the stable range property of principal left ideal rings.
Even though the same stipulations on the ordering of the ring also apply to chain rings, this has not been addressed explicitly in~\cite{GGS14}.
This may be due to the fact that many chain rings, such as $\Z_{p^r}:=\Z/_{p^r\Z}$ for any prime~$p$ and other small chain rings,
come with a `natural' order, which seems to have been tacitly assumed in~\cite{GGS14}.
These orderings do indeed respect containments of ideals.

An interesting role is played by the value of the selection property for the zero vector.
It is not hard to see that the lexicode is free if the zero vector does not satisfy the selection property.
However, even though we may easily toggle the value of the property for the zero vector between true and false, the outcome of the
greedy algorithm may fundamentally change. This is illustrated by various examples in Section~\ref{SS-Exa}.
In addition, the lexicode heavily depends on the ordering of the ring elements (even if the ordering respects ideal containment).
This is also true in the field case where even the dimension of the lexicode may depend on the ordering.
In Section~\ref{SS-Exa}  we present an abundance of examples illustrating the various features of the algorithm and, in particular,
the dependence of the lexicode on the ordering.

The paper is organized as follows.
In the next section we recall crucial properties of finite principal left ideal rings and discuss various weight functions as well as
other properties that may serve as selection property for a greedy algorithm.
In Section~\ref{SS-Ord} we introduce respectful orderings on~$R$ and establish their existence.
We use such an ordering along with an ordered basis of the left $R$-module~$R^n$ to order the module lexicographically.
Section~\ref{SS-Alg} is devoted to the greedy algorithm and its properties.
Finally, in Section~\ref{SS-Exa}  we present examples illustrating the various features of the algorithm and the dependence of
the lexicode on the ordering.

\section{Preliminaries}
We begin with some basic ring-theoretic properties that will be needed later on.
For now let~$R$ denote any (non-commutative) ring with identity.
We use the notation $R^\ast$ for the group of units of~$R$.

We need to collect some crucial properties of finite principal left ideal rings and start with the stable range.
A a ring~$R$ is said to have \emph{(left) stable range 1} if whenever $p,q\in R$ satisfy $Rp+Rq=R$,
there exists $t\in R$ such that $tp+q\in R^\ast$; see \cite[(20.10)]{Lam01}.
Right stable range 1 is defined similarly.
In \cite[Thm.~1.8]{Lam04} Lam shows that left and right stable range 1 are actually equivalent properties.

Recall that a ring~$R$ is called \emph{semilocal} if $R/\text{rad}(R)$ is left artinian, where
rad$(R)$ denote the Jacobson radical of~$R$. Clearly, all finite rings are semilocal.
The following result is known as Bass' Theorem.

\begin{theorem}[\mbox{\cite[(20.9)]{Lam01}}]
Let $R$ be a semilocal ring, $q\in R$, and $I$ a left ideal of $R$. If $I+Rq=R$, then the coset $I+q$ contains a unit of~$R$.
Thus~$R$ has stable range~$1$.
In particular, every finite ring has stable range~$1$.
\end{theorem}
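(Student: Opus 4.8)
The plan is to reduce the statement, in two steps, to a linear-algebra computation over a division ring. First I would pass to $\bar R:=R/\rad(R)$, which is semisimple because $R$ is semilocal. The key preliminary observation is that units lift modulo the radical: if $x\in R$ has a unit image $\bar x\in\bar R$, pick $y\in R$ with $xy-1,\,yx-1\in\rad(R)$; since $1+\rad(R)\subseteq R^\ast$, both $xy$ and $yx$ are units, so $x$ is both left- and right-invertible, hence a unit. The hypothesis $I+Rq=R$ descends to $\bar I+\bar R\,\bar q=\bar R$ for the image $\bar I$, which is again a left ideal. Thus it suffices to produce $\bar\imath\in\bar I$ with $\bar\imath+\bar q\in\bar R^\ast$ and then lift $\bar\imath$ to $I$; so I may assume $R$ is semisimple.

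For the semisimple case I would invoke Artin--Wedderburn, $R\cong M_{n_1}(D_1)\times\cdots\times M_{n_k}(D_k)$ with the $D_j$ division rings, and note that left ideals, cosets, and units all split as products over the factors, as does the hypothesis $I+Rq=R$. Hence everything reduces to the case $R=M_n(D)$ with $D$ a division ring. Here I would use the standard description of the left ideals of $M_n(D)$: associating to a subspace $W\subseteq D^n$ the set $I_W$ of matrices all of whose rows lie in $W$ yields exactly the left ideals, under which $M_n(D)q=I_W$ with $W$ the row space of $q$, and $I_{W_1}+I_{W_2}=I_{W_1+W_2}$. Writing the given ideal as $I=I_{W'}$, the hypothesis becomes $W'+W=D^n$.

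The heart of the argument is then the following completion statement: if $w_1,\dots,w_n$ are the rows of $q$ (so they span $W$) and $W'+W=D^n$, there exist $u_1,\dots,u_n\in W'$ such that $w_1+u_1,\dots,w_n+u_n$ is a basis of $D^n$; for then the matrix $i$ with rows $u_1,\dots,u_n$ lies in $I_{W'}=I$ and $i+q$ is invertible, i.e.\ lies in $R^\ast$, as wanted. To get this, I would choose among the $w_k$ a subset, say $w_1,\dots,w_s$, whose images modulo $W'$ form a basis of $D^n/W'$ (possible since $W+W'=D^n$), set $u_1=\dots=u_s=0$, and put $U=\spann\{w_1,\dots,w_s\}$. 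Then $U\cap W'=0$ and $\dim U+\dim W'=s+(n-s)=n$, so $U\oplus W'=D^n$, and the inclusion $W'\hookrightarrow D^n$ followed by the projection $D^n\to D^n/U$ is an isomorphism; hence $\overline{w_k+u_k}$ runs over all of $D^n/U$ as $u_k$ runs over $W'$. Thus for $k=s+1,\dots,n$ I can pick $u_k\in W'$ making $\overline{w_{s+1}+u_{s+1}},\dots,\overline{w_n+u_n}$ a basis of $D^n/U$; together with $w_1,\dots,w_s$ (which span $U$) this produces a basis of $D^n$.

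Finally, the ``thus'' part: given $Rp+Rq=R$, apply the first assertion with $I=Rp$ to obtain $s\in R$ with $sp+q\in R^\ast$, which is exactly the stable range $1$ condition; and every finite ring is semilocal (indeed Artinian), so it has stable range $1$. The step I expect to be the main obstacle is the matrix-ring computation: keeping the dictionary between subspaces and left ideals of $M_n(D)$ straight, and — above all — carrying out the completion when the rows of $q$ are linearly dependent, which forces one to work relative to the subspace $U$ spanned by an independent subset of them rather than with $q$ directly.
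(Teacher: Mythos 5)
The paper does not prove this theorem; it is quoted directly from Lam~\cite[(20.9)]{Lam01}, so there is no in-paper proof to compare against. Your argument is correct and is essentially the standard textbook proof of Bass's theorem: pass to $R/\rad(R)$ using the fact that units lift modulo the radical, invoke Artin--Wedderburn to reduce to a single matrix ring $M_n(D)$ over a division ring, translate left ideals into left $D$-subspaces of $D^n$ via row spaces, and then carry out the basis-completion argument. All of the individual steps check out: the dictionary $W\mapsto I_W$ between subspaces and left ideals is accurate, the additivity $I_{W_1}+I_{W_2}=I_{W_1+W_2}$ is correct, and the completion step correctly handles the case where the rows of $q$ are dependent by first selecting a subset $w_1,\dots,w_s$ whose images form a basis of $D^n/W'$ and then working relative to $U=\spann\{w_1,\dots,w_s\}$, using $U\oplus W'=D^n$ to see that $W'\to D^n/U$ is an isomorphism. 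The deduction of stable range~1 from the ideal statement (take $I=Rp$) and the observation that finite rings are semilocal are both correct.
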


The next result provides a useful characterization of rings with stable range 1.

\begin{theorem}[\mbox{\cite[Thm.~1.9]{Lam04} or \cite[Thm.~2.9]{Can95}}]\label{thm:UnitLinearComb}
Let~$R$ be any ring. The following are equivalent.
\begin{romanlist}
\item $R$ has stable range 1.
\item If $p,q,d\in R$ satisfy $Rp+Rq=Rd$, then there exists $t\in R$ and $u\in R^\ast$ such that $tp+q=ud$.
\end{romanlist}
\end{theorem}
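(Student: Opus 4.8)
Granting everything stated above, the plan is as follows. The implication (ii)$\,\Rightarrow\,$(i) is the easy one: specializing $d=1$ turns the hypothesis $Rp+Rq=R$ of~(i) into exactly the conclusion that some $tp+q$ equals a unit $u=u\cdot 1$, which is the defining property of stable range~$1$. So the real work lies in (i)$\,\Rightarrow\,$(ii), and the idea is to reduce it to a \emph{single} application of the (principal) stable range~$1$ property.

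Assume $R$ has stable range~$1$ and $Rp+Rq=Rd$. First I would transfer the relation back along the map $r\mapsto rd$, i.e.\ across the left annihilator $L:=\{r\in R\mid rd=0\}$. Since $p,q\in Rp+Rq=Rd$, write $p=xd$ and $q=yd$; since $d\in Rd=Rp+Rq$, write $d=ap+bq$. Then $d=(ax+by)d$, so $m:=1-ax-by\in L$, and the identity $1=ax+by+m$ shows that $Rx+Ry+L=R$. The point of this rewriting is that it now suffices to find $t\in R$ and $u\in R^\ast$ with $tx+y-u\in L$: right-multiplying by $d$ and using $xd=p$, $yd=q$ then gives $(tx+y)d=ud$, that is, $tp+q=ud$, as wanted.

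For the remaining step, set $J:=Rx+L$, a left ideal of $R$. From $Rx+Ry+L=R$ we get $J+Ry=R$, hence $1=j+ry$ for some $j\in J$ and $r\in R$, and therefore $Rj+Ry\ni j+ry=1$, i.e.\ $Rj+Ry=R$ with the \emph{single} element $j$. Now apply stable range~$1$ to the pair $(j,y)$: there is $t_0\in R$ with $u:=t_0 j+y\in R^\ast$. Because $J$ is a left ideal containing $j$, we have $t_0 j\in J=Rx+L$, say $t_0 j=tx+n$ with $t\in R$ and $n\in L$; then $u=tx+y+n$, and right-multiplication by $d$ yields $ud=t(xd)+yd+nd=tp+q$. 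Hence $tp+q=ud$ with $u\in R^\ast$, which is precisely~(ii).

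The hard part, such as it is, is the passage $J+Ry=R\,\Rightarrow\,Rj+Ry=R$ for a single $j\in J$, together with the bookkeeping that the correction $n=t_0j-tx$ lands in $L$ and hence annihilates $d$ on the right. This is exactly what lets the principal form of stable range~$1$ (the only form available to us, since $R$ is arbitrary) do the job without any appeal to multi-generator unimodular-row reductions or to the semilocal version of Bass' theorem; everything else is routine.
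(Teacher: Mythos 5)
The paper itself does not prove Theorem~\ref{thm:UnitLinearComb}; it is quoted from Lam~\cite{Lam04} and Canfell~\cite{Can95} without proof, so there is no in-paper argument to compare against. Evaluated on its own, your proof is correct. The (ii)$\Rightarrow$(i) direction is indeed just the specialization $d=1$. For (i)$\Rightarrow$(ii), your bookkeeping checks out: writing $p=xd$, $q=yd$, $d=(ax+by)d$ gives $1-ax-by\in L:=\{r\in R\mid rd=0\}$, hence $Rx+Ry+L=R$; setting $J=Rx+L$ and extracting a single $j\in J$ with $1=j+ry$ gives $Rj+Ry=R$; the principal form of stable range~$1$ then produces $t_0$ with $u:=t_0j+y\in R^\ast$; and since $J$ is a left ideal, $t_0j=tx+n$ with $n\in L$, so $ud=(tx+y+n)d=tp+q$ because $nd=0$. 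Every left-ideal and annihilator claim you use ($L$ a left ideal, $J=Rx+L$ a left ideal, $t_0j\in J$, $n\in L$) is valid, and the only nontrivial tool invoked is the one-variable stable range~$1$ hypothesis, as required since $R$ is arbitrary. One small remark: the step you flag as ``the hard part'' --- passing from $J+Ry=R$ to $Rj+Ry=R$ for a single $j\in J$ --- is actually immediate (any $j$ appearing in a representation $1=j+ry$ works), so the genuine content of the argument is rather the initial translation across the annihilator $L$; but the proof as written is complete and correct.
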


We now turn to codes over~$R$. The following definition is standard.
Throughout, all modules are left $R$-modules.
\begin{defn}\label{D-Code}
Let $n\in\N$. A \emph{code of length~$n$ over the alphabet~$R$} is a left submodule of $R^n$.
\end{defn}

Bass' Theorem leads to a well-known and extremely useful consequence.

\begin{prop}[\mbox{\cite[Prop.~5.1]{Wo99}}]\label{prop:AssociateGenerators}
Let $R$ be any finite ring and~$M$ a left $R$-module. Let $a,b\in M$ be such that $Ra=Rb$. Then $ua=b$ for some $u\in R^\ast$.
\end{prop}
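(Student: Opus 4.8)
The plan is to derive Proposition~\ref{prop:AssociateGenerators} as a direct consequence of Bass' Theorem (the stable range~1 property) together with a standard finiteness argument. First I would reduce to the case where $M = Ra + Rb$; indeed, since $Ra = Rb$, we have $Rb \subseteq Ra$, so $a$ generates the cyclic submodule containing both elements, and we may work inside $N := Ra$.

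The key step is to produce a surjective module endomorphism of $N$ sending $a$ to $b$ and then argue it is an isomorphism. Since $Rb = Ra = N$, the map $\varphi\colon R \to N$, $r \mapsto rb$, is surjective; likewise $\psi\colon R \to N$, $r \mapsto ra$, is surjective. Because $b \in N = Ra$, write $b = sa$ for some $s \in R$; then right multiplication by... more carefully, I want a well-defined $R$-linear map $N \to N$ taking $a \mapsto b$. Such a map exists iff the annihilator of $a$ (as an element of $N$) is contained in the annihilator of $b$. Since $Ra = Rb$, writing $a = tb$ and $b = sa$ gives $\mathrm{ann}(a) = \mathrm{ann}(b)$ (if $ra = 0$ then $rb = rsa = \dots$; here one uses $b = sa$ so $rb = rsa$, which need not vanish — so instead I argue: $ra=0 \Rightarrow r(Ra)=0 \Rightarrow rN = 0 \Rightarrow rb = 0$). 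Hence there is a well-defined surjective $R$-endomorphism $f$ of $N$ with $f(a) = b$.

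Now the main obstacle — and the place where finiteness of $R$ is essential — is to upgrade surjectivity of $f$ to the existence of a unit $u \in R^\ast$ with $ua = b$. Since $R$ is finite, $N$ is a finite set, so the surjective endomorphism $f$ is in fact bijective, and some power $f^k$ equals the identity on $N$; thus $f$ is an automorphism of $N$. This alone does not yet give a ring unit, so I would instead invoke Theorem~\ref{thm:UnitLinearComb} directly: from $Ra = Rb$ we get, taking $p = a$, $q = 0$... that does not fit the shape either. The cleanest route is: write $a = tb$ for some $t \in R$ and $b = sa$ for some $s \in R$. Then $a = tsa$, so $(1 - ts) \in \mathrm{ann}(a)$. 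Consider $Rts + R(1-ts) = R$; by Bass' Theorem (stable range~1) the coset $ts + R(1-ts)$ contains a unit $u = ts + w(1-ts)$ for some $w \in R$. Then $ua = tsa + w(1-ts)a = tsa = (ts)a = t(sa) = tb$; and since $a = tb$ we must be careful — we actually get $ua = tsa = a$?? Let me recompute: $sa = b$, so $tsa = tb = a$, giving $ua = a$, not $b$. So I apply the same idea to the other side: since $b = sa$, consider $Rst + R(1-st) = R$; note $b = sa = s(tb) = stb$, so $(1-st) \in \mathrm{ann}(b)$, and Bass gives a unit $v = st + w'(1-st)$ with $vb = stb = b$. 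That still fixes $b$. The correct conclusion comes from combining: I want $ua = b$ where $b = sa$, so I should show $s$ agrees with a unit modulo $\mathrm{ann}(a)$, i.e.\ find a unit $u$ with $(u - s) \in \mathrm{ann}(a)$. Since $Ra = Rb$ and $b = sa$, we have $R s a = R a$, so $Rs + \mathrm{ann}(a) \supseteq$ ... the point is $Rs \cdot a = Ra$ forces $Rs + \mathrm{ann}_\ell(a) = R$ where $\mathrm{ann}_\ell(a) = \{r : ra = 0\}$ is a left ideal; then Bass' Theorem applied to $I = \mathrm{ann}_\ell(a)$ and $q = s$ with $I + Rs = R$ yields a unit $u$ in the coset $\mathrm{ann}_\ell(a) + s$, whence $ua = sa = b$. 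The main obstacle is precisely verifying the identity $\mathrm{ann}_\ell(a) + Rs = R$ from $Rsa = Ra$, which I expect follows because $a \in Rsa$ gives $a = rsa$ for some $r$, so $1 - rs \in \mathrm{ann}_\ell(a)$ and hence $1 \in \mathrm{ann}_\ell(a) + Rs$. With that in hand, Bass' Theorem finishes the proof immediately.
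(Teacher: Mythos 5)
Your final argument is correct. The paper does not give a proof of this proposition; it cites Wood~\cite[Prop.~5.1]{Wo99}, and the remark following the statement only verifies the special case $M=R$ via Theorem~\ref{thm:UnitLinearComb}. Your route proves the general case directly from Bass' Theorem: since $\mathrm{ann}_\ell(a)=\{r\in R: ra=0\}$ is a left ideal (the kernel of $r\mapsto ra$), and since $Ra=Rb$ gives $b=sa$ and $a=tb$ for some $s,t\in R$, one has $(1-ts)a=0$, hence $1=(1-ts)+ts\in\mathrm{ann}_\ell(a)+Rs$, i.e.\ $\mathrm{ann}_\ell(a)+Rs=R$; Bass' Theorem (with $I=\mathrm{ann}_\ell(a)$ and $q=s$) then produces a unit $u\in\mathrm{ann}_\ell(a)+s$, and $ua=sa=b$. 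This is a self-contained proof using only tools already in the paper, which is arguably preferable to the bare citation. However, the write-up carries several dead ends that should be pruned: the digression about surjective endomorphisms of the finite module $N$ being bijective (true, but it does not produce a ring unit, as you noticed), the misapplication of Bass to $Rts+R(1-ts)$ (which only gives $ua=a$), and the attempted use of Theorem~\ref{thm:UnitLinearComb} with $q=0$ (which handles $M=R$ but not general $M$). Stripped of these, the proof is the clean annihilator argument in your final paragraph.
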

Note that if~$R$ has stable range~$1$, then Proposition~\ref{prop:AssociateGenerators} follows immediately for the module
$M=R$ since $R0+Rb=Ra$ implies $b=ua$ for
some $u\in R^\ast$ thanks to Theorem~\ref{thm:UnitLinearComb}.
In fact, \cite[Theorem 1.9(3)]{Lam04} shows that for the case $M=R$ the property in Proposition~\ref{prop:AssociateGenerators}
characterizes stable range~$1$.

The next corollary follows trivially.

\begin{cor}\label{cor:UnitOrbits}
Let~$R$ and~$M$ be as in Proposition~\ref{prop:AssociateGenerators}.
Then the group $R^\ast$ acts naturally on $M$ by $(u,a)\mapsto ua$.
The orbits of this group action are exactly the sets of generators for the distinct cyclic left submodules of~$M$.
In particular, the orbits of the action of~$R^*$ on~$R$ are the sets of generators for the distinct principal left ideals of~$R$.
\end{cor}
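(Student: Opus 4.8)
The plan is to deduce everything from Proposition~\ref{prop:AssociateGenerators} with essentially no further work, since the corollary is really just a repackaging of that statement in the language of group actions. First I would verify that the map $(u,a)\mapsto ua$ is indeed a group action: this is immediate from the module axioms, as $1\cdot a=a$ and $(uv)\cdot a=u(va)$ for $u,v\in R^\ast$ and $a\in M$ (using that $R^\ast$ is a group under the ring multiplication, so products of units are units).

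Next I would identify the orbits. Fix $a\in M$ and consider its orbit $R^\ast a=\{ua\mid u\in R^\ast\}$. The key observation is that two elements $a,b\in M$ lie in the same orbit if and only if $Ra=Rb$. For the forward direction, if $b=ua$ with $u\in R^\ast$, then $Rb=Rua\subseteq Ra$, and conversely $a=u^{-1}b$ gives $Ra\subseteq Rb$, so $Ra=Rb$. For the reverse direction, $Ra=Rb$ yields $b=ua$ for some unit~$u$ directly by Proposition~\ref{prop:AssociateGenerators}. Hence the orbit of~$a$ is exactly $\{b\in M\mid Rb=Ra\}$, which is precisely the set of all generators of the cyclic submodule $Ra$. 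Since distinct cyclic submodules have disjoint generating sets and every element of~$M$ generates some cyclic submodule, the orbits are exactly the generating sets of the distinct cyclic left submodules of~$M$, as claimed.

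Finally, the ``in particular'' statement is the special case $M={}_RR$, the ring~$R$ viewed as a left module over itself: here the cyclic submodules are exactly the principal left ideals $Ra$, so the orbits of $R^\ast$ acting on~$R$ by left multiplication are the generating sets of the distinct principal left ideals.

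I do not anticipate any real obstacle: the only nontrivial input is Proposition~\ref{prop:AssociateGenerators}, which is already available, and everything else is a routine unwinding of definitions. The one point worth stating carefully is the equivalence ``same orbit $\iff$ same cyclic submodule,'' since that is what makes the orbit decomposition coincide with the partition of~$M$ by cyclic submodules.
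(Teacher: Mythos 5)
Your argument is correct and is exactly the routine unwinding the paper has in mind: the paper itself offers no proof, stating only that the corollary ``follows trivially'' from Proposition~\ref{prop:AssociateGenerators}, and your write-up supplies precisely that trivial verification (group-action axioms, the equivalence ``same orbit $\iff$ same cyclic submodule'' via the proposition, and the specialization $M = {}_RR$).
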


In this paper we focus on codes over finite principal left ideal rings.
Recall that a ring is called a \emph{principal left ideal ring} if every left ideal is principal.
In \cite[p.~364]{Nec73} Nechaev showed that every finite principal left ideal ring is a principal ideal ring
(that is, each left ideal and each right ideal is principal).
One may notice that finite principal left ideal rings are Frobenius rings because they have a principal left socle, see~\cite[Thm.~1]{Hon01}.

For modules over such rings we have the following powerful property.

\begin{theorem}\label{T-FLPIR}
Let~$R$ be a finite principal left ideal ring and let $N,M$ be free left $R$-modules of finite rank such that $M$ is a submodule of $N$.
Then $M$ is a direct summand of $N$, that is, there is a submodule~$P$ of~$N$ such that $M\oplus P=N$.
\end{theorem}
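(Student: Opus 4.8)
The plan is to reduce the statement to a single-generator "unimodular extension" argument and then iterate. First I would set $k=\rank N$ and $m=\rank M$, and proceed by induction on $m$; the case $m=0$ is trivial. For the inductive step it suffices to find a free submodule $F_1\subseteq M$ of rank $1$ that is a direct summand of $N$ with a complement $N'$ that is again free (of rank $k-1$): then $M\cap N'$ is a submodule of the free module $N'$ of rank $m-1$ (here one checks $M = F_1\oplus(M\cap N')$ using that $F_1$ is a summand of $N$ contained in $M$), so by induction $M\cap N'$ is a direct summand of $N'$, hence of $N$, and adding back $F_1$ gives the desired complement $P$ of $M$ in $N$.

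The heart of the matter is therefore: given a nonzero $M\subseteq N\cong R^k$, produce an element of $M$ that is part of a free basis of $N$ (a \emph{unimodular} element of $N$ lying in $M$). Pick any nonzero $v\in M$ and write its coordinate vector as $v=(a_1,\dots,a_k)$ with respect to a fixed basis of $N$. Let $Rd = Ra_1+\dots+Ra_k$, which is principal since $R$ is a principal left ideal ring; since $R$ is finite, $d$ is not a left zero divisor on the relevant quotient — more usefully, I would instead argue directly with the stable range~$1$ property. Using Theorem~\ref{thm:UnitLinearComb} repeatedly (the implication (i)$\Rightarrow$(ii)), I can find $t_2,\dots,t_k\in R$ and a unit $u$ such that $t_2 a_2+\dots+t_k a_k$ combined with $a_1$ yields $a_1' = w$ with $Rw = Rd$; applying row operations (which are invertible $R$-linear automorphisms of $N$) one transforms $v$ into $(d',0,\dots,0)$ with $Rd'=Rd$ and, by Proposition~\ref{prop:AssociateGenerators}, after one more unit scaling, into $(d,0,\dots,0)$ where $Rd$ is the ideal generated by all coordinates of $v$. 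Crucially, $v$ and $(d,0,\dots,0)$ differ by an $R$-linear automorphism of $N$, so it is enough to split off $R(d,0,\dots,0)$.

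Now I must choose $v\in M$ more carefully so that the resulting $d$ is a unit, i.e.\ so that the chosen element is genuinely unimodular. For this I would pass to the quotient $N/\rad(R)N \cong (R/\rad R)^k$, a module over the semisimple artinian ring $R/\rad R$; the image of $M$ there is a direct summand (semisimplicity), so I can pick $v\in M$ whose image is unimodular in $(R/\rad R)^k$, meaning the ideal generated by its coordinates is all of $R/\rad R$. Then the ideal $Rd$ generated by the coordinates of $v$ satisfies $Rd+\rad R = R$, and by Nakayama (or directly by Bass' Theorem, since $\rad R + Rd = R$ forces the coset to contain a unit) we get $Rd = R$, so $d\in R^\ast$ and $R(d,0,\dots,0)=R(1,0,\dots,0)$ after unit scaling — a free rank-$1$ direct summand of $N$ contained in $M$, with the evident free complement. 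This completes the inductive step. The main obstacle is precisely this last point: ensuring that the element extracted from $M$ is unimodular in $N$ rather than merely generating a principal submodule; the finiteness of $R$ (hence semilocality, hence the applicability of Bass' Theorem and the semisimplicity of $R/\rad R$) is what makes it go through, and one should take care that all the "row operations" used really are automorphisms of $N$ preserving freeness of the complement, which follows since elementary transvections and unit-diagonal maps are invertible over any ring.
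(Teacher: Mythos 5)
Your reduction to finding a unimodular element of $N$ inside $M$ is the right strategy, but the key step is not justified. You write that since the image $\bar{M}$ of $M$ in $N/\rad(R)N\cong(R/\rad R)^k$ is a direct summand (semisimplicity), you can pick $v\in M$ whose image is unimodular. That implication is false: a direct summand of a module over a semisimple ring need not contain any unimodular element. For instance, over $\bar R=M_2(\F_2)$ the left ideal $\bar R\cdot\bigl(\begin{smallmatrix}1&0\\0&0\end{smallmatrix}\bigr)$ is a simple direct summand of $\bar R$, yet it consists entirely of singular matrices and so contains no unit. What would force $\bar M$ to contain a unimodular element is that it be a \emph{free} $\bar R$-module, but $\bar M\cong M/(M\cap\rad(R)N)$ is in general only a quotient of the free module $M/\rad(R)M$; equality $M\cap\rad(R)N=\rad(R)M$ (equivalently, freeness of $\bar M$) is in fact \emph{equivalent} to $M$ being a direct summand of $N$, so you are assuming what you set out to prove. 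Establishing unimodularity genuinely requires more input than semisimplicity of $R/\rad R$ --- either Nechaev's structure theorem together with a component-by-component annihilator computation (using that $R$ is a product of matrix rings over chain rings, where the right annihilator of $\rad^{e-1}$ is exactly $\rad$), or, as the paper does, a direct citation of Nechaev's structure theorem plus Hirano's theorem that free submodules of free modules split over finite direct sums of matrix rings over finite local rings.

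Two smaller issues. First, in the induction you need $M\cap N'$ to be \emph{free} of rank $m-1$ to invoke the inductive hypothesis; from $M=F_1\oplus(M\cap N')$ you only get directly that it is a projective direct summand of $M$, and concluding freeness requires a cancellation argument (which is available over rings with stable range~$1$, but should be invoked explicitly, since over $M_n(\F_q)$ direct summands of free modules need not be free). Second, the ``row operation'' reduction of $v=(a_1,\dots,a_k)$ to $(d,0,\dots,0)$ via automorphisms of the left module $R^k$ acts by right multiplication by $\GL_k(R)$, so the invariant is the \emph{right} ideal $a_1R+\dots+a_kR$, not the left ideal $Ra_1+\dots+Ra_k$; the argument still works since finite principal left ideal rings are two-sided principal (Nechaev) and stable range~$1$ is left-right symmetric (Lam), but as written the left ideal $Rd$ is not preserved and the step needs to be restated in terms of right ideals.
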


\begin{proof}
By \cite[p.~364/365]{Nec73} each each finite principal left ideal ring is the direct sum of matrix rings over finite chain rings.
Now the result follows from \cite[Thm.~4.7]{Hi16} by Hirano who proved that the desired direct summand property is true for all rings that are
direct sums of matrix rings over finite local rings.
\end{proof}

A special case of finite principal left ideal rings are finite chain rings.
Recall that a \emph{finite chain ring} is a finite ring wherein the left ideals are linearly ordered with respect to inclusion.
It turns out that the left ideals of a finite chain ring~$R$ are all two-sided and therefore agree with the right ideals.
In fact, $R$ can be characterized as a local ring whose maximal ideal is principal and generated by some nilpotent element $\gamma\in R$.
If~$e$ is the nilpotency index of~$\gamma$ then the ideals of~$R$ are given by the chain
\begin{equation}\label{e-idealChain}
R=(1)
\supsetneq (\gamma)
\supsetneq (\gamma^2)
\supsetneq \ldots
\supsetneq (\gamma^{e-1})
\supsetneq (\gamma^e)
=(0).
\end{equation}
For all this, see, for instance, ~\cite[Thm.~2.1]{HoLa00} by Honold/Landjev and the references therein.

\medskip
We now turn to various coding-theoretic weight functions.
Let~$R$ be any finite ring.
A map $w: R\longrightarrow\R$ satisfying $w(0)=0$ is called a \emph{weight function} on $R$.
Any such weight~$w$ has a natural extension to vectors $(x_1,\ldots,x_n)\in R^n$ via the rule
\begin{equation}\label{e-ExtWeight}
   w(x_1,\ldots,x_n)=\sum_{i=1}^n w(x_i).
\end{equation}
Here are some special instances of weight functions.

\begin{defn}\label{D-Weights}
Let $R$ be a ring.
\begin{alphalist}
\item The \emph{Hamming weight} $\wtH$ on $R$ is defined by the rule $\wtH(0)=0$ and $\wtH(x)=1$ for all $x\in R\,\backslash\,\{0\}$.
\item Let $R=M_k(\F)$, the ring of $k\times k$-matrices over the finite field~$\F$.
        We define the rank weight of~$X\in R$ as the rank of~$X$, denoted by $\rank(X)$.
        For a vector $x=(X_1,\ldots,X_n)\in M_k(\F_q)^n$ we define the \emph{rank sum}  as in~\eqref{e-ExtWeight}
\[
    \rankSum(x)=\sum_{i=1}^n\rank(X_i).
\]
\item On any finite ring~$R$ set $\wtU(a)=1$ if $a\in R^*$ and $\wtU(a)=0$ otherwise.
        Then $\wtU(x_1,\ldots,x_n)$ counts the number of units in the vector $(x_1,\ldots,x_n)$.
\item On $R=\Z_m:=\Z/m\Z$ the \emph{Lee weight} is defined as $\wtL(x)=\min(x,m-x)$ and the
      \emph{Euclidean weight} is defined as $\wtE(x)=\min(x,m-x)^2$.
\end{alphalist}
\end{defn}

In addition to the above, the homogeneous weight plays a prominent role in ring-linear coding.
The following definition is taken from \cite[Definition 1.2]{GrSch00} by Greferath/Schmidt.
In the same paper the authors also establish existence and uniqueness of the homogeneous weight for all finite rings.

\begin{defn}\label{D-HomogWeight}
Let $R$ be a finite ring. A function $\omega:R\to\R$ is called the (normalized left) \emph{homogeneous weight} if it satisfies the following properties.
\begin{romanlist}
\item $\omega(0)=0$
\item If $Ra=Rb$ for $a,b\in R$, then $\omega(a)=\omega(b)$.
\item For every $a\in R$ we have $\sum_{x\in Ra}\omega(x)=|Ra|.$
\end{romanlist}
\end{defn}

\begin{ex}
On~$\Z_2$ and~$\Z_3$ the Hamming weight and Lee weight agree, and the homogeneous weight agrees with these up to a factor~$2$.
On~$\Z_4$, the normalized homogeneous weight agrees with the Lee weight and is given by the values
$\omega(0)=0,\ \omega(1)=\omega(3)=1,\ \omega(2)=2$.
On~$\Z_m$, where $m>4$, the Hamming weight, Lee weight, and homogeneous weight are mutually distinct.
\end{ex}

In the next sections we will discuss a greedy algorithm that results in codes having a pre-specified property.
The property serves as the selection criterion in the algorithm.
We will need the property to be multiplicative in the following sense.

\begin{defn}
Let $R$ be a ring.
A boolean function $P:R^n\longrightarrow \{\text{true}, \text{false}\}$ is called a \emph{property} on $R^n$.
We call~$P$ \emph{left multiplicative} if $P[x]=P[u x]$ for all $u\in R^\ast$.
\emph{Right multiplicative} is defined analogously.
If~$P$ is both left and right multiplicative, we simply call $P$ \emph{multiplicative}.
\end{defn}
Often we will simply write $P[x]$ for $P[x]=\text{true}$.
For instance, $[P[x]\Longleftrightarrow\wtH(x)>\delta]$ means that $P[x]=\text{true}$ if $\wtH(x)>\delta$ and false otherwise.

Many selection properties may be desirable in order to construct codes. The following are some commonly desired properties.
\begin{ex}\label{ex:properties}
(a) Let~$w$ be any of the weights introduced in Definition~\ref{D-Weights}(a) --~(c), Definition~\ref{D-HomogWeight} or the Lee
   or Euclidean weight on~$\Z_4$ (see~\ref{D-Weights}(d)).
      Extend~$w$ to~$R^n$ as in~\eqref{e-ExtWeight}. In all these cases $w(ux)=w(x)$ for $x\in R^n$ and $u\in R^*$.
      Therefore, for any $\delta\in\R$, the property $[P[x]\Longleftrightarrow w(x)\geq\delta]$ is left multiplicative.
      The same is true for the property $[P[x]\Longleftrightarrow w(x)\in S]$, where $S$ is a pre-specified set of
      admissible weight values (such as even weights).
      In particular,  $[P[x]\Longleftrightarrow\rankSum(x)\geq\delta]$ is a multiplicative property on $M_k(\F)^n$ for any $\delta>0$.

(b) Let~$R$ be any commutative ring and denote by $x\cdot y:=\sum_{i=1}^n x_i y_i$ the standard dot product on~$R^n$.
      Then the property $[P[x]\Longleftrightarrow x\cdot x=0]$ is multiplicative because for any $u\in R^*$ we have
       $(u x)\cdot(u x)=u^2(x\cdot x)$.
       The same property is in general not multiplicative if~$R$ is not commutative (as one easily verifies for the matrix ring $M_2(\F_2)$).

(c) Let~$I\subseteq R$ be a left ideal of~$R$. On~$R^n$ define $[P[x]\Longleftrightarrow \sum_{i=1}^n x_i\in I]$.
        Then~$P$ is left multiplicative.
\end{ex}

Of course, there are plenty of other multiplicative properties over finite rings.
For example, the sum of the entries being a unit is a multiplicative property.
However, this property is not useful for our purposes.
Indeed, we will aim at constructing \emph{linear} codes with a desired property, and thus in order to obtain non-trivial codes
we need the property to be reasonably conserved upon multiplication by arbitrary ring elements.
Similarly, the property that the sum of the entries is a zero divisor (even though preserved by multiplication with any ring element)
will often not lead to codes with more than one generator as this property is scarcely preserved under addition.

One particular property can, for many rings, be used to construct self-orthogonal codes.
Let us summarize the necessary information about self-orthogonal codes.

\begin{remark}\label{rem:Isotr}
Let~$R$ be a commutative ring.
On $R^n$ consider the (multiplicative) property $[P[x]\Longleftrightarrow x\cdot x =0]$, where~$x\cdot y$ denotes the
standard dot product, see Example~\ref{ex:properties}(b).
If the characteristic of~$R$ is odd then a linear code $C\subseteq R^n$ satisfies
\[
  x\cdot x=0\text{ for all } x\in C \Longrightarrow x\cdot y=0\text{ for all }x,y\in C.
\]
This follows immediately from $0=(x+y)\cdot(x+y)=x\cdot x+2(x\cdot y)+y\cdot y=2(x\cdot y)$, and since~$2$ is not a zero divisor,
we obtain the desired result.
Recall that the dual code of~$C\subseteq R^n$ is defined as $C^{\perp}:=\{y\in R^n\mid y\cdot x=0\text{ for all }x\in C\}$
and that~$C$ is \emph{self-orthogonal} (resp.\ \emph{self-dual}) if $C\subseteq C^\perp$ (resp. $C=C^\perp$).
The above shows that if the characteristic of~$R$ is odd, then a code~$C$ is self-orthogonal if
$x\cdot x=0$ for all $x\in C$.
Self-orthogonality is thus characterized by a suitable property for the individual elements of the code (instead of pairs of elements).
Finally, we remark that if~$R$ is a finite principal left ideal ring, and thus in particular a Frobenius ring,
and $C\subseteq R^n$ a code, then $|C|\cdot|C^\perp|=|R^n|$.
This is a consequence of the double annihilator property for finite Frobenius rings; see for instance~\cite[p.~193]{GL14homog}.
\end{remark}

We close this section with addressing the value $P[0]$ for a given property~$P$.
This will play an interesting role in Section~\ref{SS-Alg}.

Most standard properties are not satisfied by the zero vector; for instance $P[0]$ is false for the very common criterion
[$P[x]\Longleftrightarrow\wtH(x)\geq\delta$], where~$\delta>0$.
We can easily set the value of $P[0]$ to our liking, due to the following proposition and corollary.

\begin{prop}\label{prop:LogicOperators}
The family of (left) multiplicative properties is closed under the logical operators  \AND\ and \OR.
\end{prop}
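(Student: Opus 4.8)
The plan is to verify the two closure claims directly from the definition of left multiplicativity. Recall that a property $P$ on $R^n$ is left multiplicative if $P[x]=P[ux]$ for all $u\in R^\ast$ and all $x\in R^n$; equivalently, for each fixed unit $u$, the map $x\mapsto ux$ is a bijection of $R^n$ that preserves the truth value of $P$.

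First I would treat the \AND\ case. Let $P_1$ and $P_2$ be left multiplicative properties on $R^n$ and set $Q:=P_1\wedge P_2$, i.e.\ $Q[x]=\text{true}$ exactly when both $P_1[x]$ and $P_2[x]$ are true. Fix $u\in R^\ast$ and $x\in R^n$. Since $P_i$ is left multiplicative, $P_i[ux]=P_i[x]$ for $i=1,2$. Hence $Q[ux]$ is true iff $P_1[ux]$ and $P_2[ux]$ are both true, iff $P_1[x]$ and $P_2[x]$ are both true, iff $Q[x]$ is true. Thus $Q[ux]=Q[x]$, so $Q$ is left multiplicative. The \OR\ case is identical word for word with ``both $\ldots$ true'' replaced by ``at least one of $\ldots$ true''.

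There is essentially no obstacle here: the argument is a one-line logical manipulation, and the only thing to observe is that left multiplicativity is a pointwise condition ``$P[ux]=P[x]$ for each $x$'', so it commutes with any pointwise Boolean combination. I would state this once and note that the same reasoning shows the family of right multiplicative properties, and hence the family of multiplicative properties, is likewise closed under \AND\ and \OR. (One could also remark that the family is closed under negation, and therefore under all finite Boolean combinations, though negation is not needed in the sequel.)
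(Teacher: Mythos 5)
Your proof is correct and is essentially the same direct verification the paper gives: both arguments simply unfold the definition and note that $(P_1\wedge P_2)[ux]=(P_1[ux]\wedge P_2[ux])=(P_1[x]\wedge P_2[x])=(P_1\wedge P_2)[x]$, and likewise for $\vee$. Your added remark that the same holds for right multiplicativity and for negation (hence all finite Boolean combinations) is a harmless, accurate extension beyond what the paper records.
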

\begin{proof}
Suppose $P$ and $Q$ are both left multiplicative properties on $R^n$. We then have, for any $\beta\in R^\ast$ and $x\in R^n$,
$(P\;\AND\;Q)[\beta x]=(P[\beta x]\;\AND\; Q[\beta x])=(P[x]\;\AND\; Q[x])=(P\;\AND\; Q)[x]$
and
$(P\;\OR\; Q)[\beta x]=(P[\beta x]\;\OR\; Q[\beta x])=(P[x]\;\OR\; Q[x])=(P\;\OR\; Q)[x],$
showing that $P\;\AND\; Q$ and $P\;\OR\; Q$ are both left multiplicative.
\end{proof}

As a result of Proposition~\ref{prop:LogicOperators}, the value of~$P[0]$ may be toggled to be either true or false, as desired.

\begin{cor}\label{cor:TogglePzero}
Consider the properties $[Q[x]\Longleftrightarrow x=0]$ and $[\hat{Q}[x]\Longleftrightarrow x\neq0]$.
Then~$Q$ and~$\hat{Q}$ are multiplicative.
As a consequence, for any left multiplicative property~$P$ on $R^n$ the properties
$P\;\OR\;Q$ and $P\;\AND\;\hat{Q}$ are again left multiplicative properties.
The former forces $P[0]$ to be true, and the latter forces $P[0]$ to be false.
\end{cor}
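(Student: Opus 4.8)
The plan is to verify the three assertions of the corollary in turn, each being a one-line consequence of either the invertibility of units or of Proposition~\ref{prop:LogicOperators}.

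First I would check that $Q$ and $\hat Q$ are multiplicative. For any $u\in R^\ast$ the map $R^n\to R^n$, $x\mapsto ux$, is a bijection (with inverse $x\mapsto u^{-1}x$, applied coordinatewise) and it fixes the zero vector; hence $ux=0$ if and only if $x=0$. Consequently $Q[ux]=Q[x]$ and $\hat Q[ux]=\hat Q[x]$ for all $u\in R^\ast$ and $x\in R^n$, so $Q$ and $\hat Q$ are left multiplicative. The identical argument applied to the coordinatewise right multiplication $x\mapsto xu$ gives right multiplicativity, so both $Q$ and $\hat Q$ are multiplicative.

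Next, since a multiplicative property is in particular left multiplicative, Proposition~\ref{prop:LogicOperators} applied to the pair $P,Q$ (respectively $P,\hat Q$) immediately yields that $P\;\OR\;Q$ and $P\;\AND\;\hat Q$ are again left multiplicative properties. Finally I would evaluate the two composite properties at the zero vector: because $Q[0]=\text{true}$ we get $(P\;\OR\;Q)[0]=(P[0]\;\OR\;\text{true})=\text{true}$ irrespective of the value of $P[0]$, and because $\hat Q[0]=\text{false}$ we get $(P\;\AND\;\hat Q)[0]=(P[0]\;\AND\;\text{false})=\text{false}$. This proves that the former construction forces $P[0]$ to be true and the latter forces it to be false.

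There is no genuine obstacle here; the only point deserving a moment's attention is that multiplication by a unit is a bijection of $R^n$ acting coordinatewise, so it can neither create nor destroy a zero coordinate, whence it cannot send a nonzero vector to zero. Everything else is a direct evaluation of the relevant truth tables together with a citation of Proposition~\ref{prop:LogicOperators}.
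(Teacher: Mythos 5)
Your argument is correct and fills in exactly the routine verification the paper treats as immediate: it establishes multiplicativity of $Q$ and $\hat Q$ via the fact that scalar multiplication by a unit is a bijection fixing~$0$, invokes Proposition~\ref{prop:LogicOperators}, and evaluates at the zero vector. The paper gives no explicit proof for this corollary, and your reasoning is the intended one.
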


\section{Orderings of $R$ and $R^n$}\label{SS-Ord}

For the remainder of this paper,~$R$ denotes a (noncommutative) finite principal left ideal ring.
Furthermore, $R^n$ is always considered as a free left $R$-module in the natural way.
We use $R\{v_1,\ldots,v_k\}$ to denote the submodule generated by the vectors $v_1,\ldots,v_k\in R^n$.

For the greedy algorithm in the next section we need a total order on the vectors in~$R^n$.
This will be achieved by picking an ordered basis of~$R^n$ and fixing an order on the scalars in~$R$.
The latter needs to have a specific property for the algorithm to work properly.

\begin{defn}\label{def:Respectful}
A total order~$<$ on~$R$ is called \emph{respectful} if for all $x,y\in R\setminus\{0\}$ it satisfies
\[
     Rx\supsetneq Ry \Longrightarrow \text{ there exists some $\alpha\in R^\ast$ such that $\alpha x<uy$ for all $u\in R^\ast$.}
\]
\end{defn}

In combination with Proposition~\ref{prop:AssociateGenerators} this tells us that, in a respectful ordering, for every nonzero
$x\in R$ there is \textit{some} generator of~$Rx$ that comes before \textit{all} nonzero elements of~$Rx$ that are not generators.
The zero element may appear at any position in a respectful order.
Note that any total order of a finite field is respectful, since fields have no proper ideals.

For the existence of respectful orderings on a general finite principal left ideal ring, we need the following concept.
Recall that a \emph{chain} is a totally ordered set.

\begin{defn}\label{def:linearExtension}
A \emph{linear extension} of a partially ordered set $(P,<_P)$ is a chain $(L,<_L)$ equipped with a bijection $f:P\to L$ such that $x <_P y$
implies $f(x) <_L f(y)$.
\end{defn}
The existence of linear extensions for finite posets is well known~\cite[p.~110]{Sta97}.

\begin{ex}
Consider the poset of ideals of $\Z_6$, ordered by inclusion.
\begin{center}
\begin{tikzpicture}
\node (1) at (0,2) {$\Z_6$};
\node (2) at (-1,1) {$(2)$};
\node (3) at (1,1) {$(3)$};
\node (0) at (0,0) {$(0)$};
\draw (2) -- (1) -- (3) -- (0) -- (2);
\end{tikzpicture}
\end{center}
There are two linear extensions of this poset, namely
\[
  (0)<_L(2)<_L(3)<_L\Z_6\ \text{ and }\ (0)<_L(3)<_L(2)<_L\Z_6.
\]
\end{ex}

\begin{defn}
Let $R$ be a principal left ideal ring and~$L$ a linear extension of the poset of left ideals of $R$, ordered by inclusion.
Let~$<$ be a total order on~$R$.
We say that~$<$ \emph{respects} $L$ if $Ry<_L Rx$ implies $x<y$ for all $x,y\in R\setminus\{0\}$.
\end{defn}

An ordering that respects a linear extension is indeed respectful, as we shall see in the proof of the next theorem.

\begin{theorem}\label{thm:RespectfulExists}
Every finite principal left ideal ring has a respectful ordering.
\end{theorem}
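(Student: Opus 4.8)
The plan is to construct a respectful ordering explicitly by first fixing a linear extension $L$ of the poset of left ideals of $R$ (which exists by the cited result on finite posets), and then building a total order $<$ on $R$ that respects $L$. The key observation, which I would record first, is that $R$ is partitioned into the unit orbits $R^\ast a$ by Corollary~\ref{cor:UnitOrbits}, and these orbits are in bijection with the principal left ideals $Ra$; since $R$ is a principal left ideal ring, every left ideal is of this form. So to define $<$ it suffices to order the orbits and then order the elements within each orbit arbitrarily. I would order the orbits according to the position of the corresponding ideal in the chain $L$: if $Ra <_L Rb$, then every element of the orbit $R^\ast a$ comes after every element of $R^\ast b$ in $<$; and within a single orbit the internal order is arbitrary. (The zero orbit $\{0\}$ corresponds to the ideal $(0)$, which is the bottom of $L$, so $0$ comes last among the nonzero-ideal orbits — or we may place it anywhere, as the definition permits.) This $<$ is a genuine total order because $L$ is a chain and the orbits partition $R$.

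Next I would verify that this $<$ respects $L$ in the sense of the preceding definition: if $Ry <_L Rx$ for nonzero $x,y$, then by construction $x$ (being in the orbit of a $<_L$-larger ideal) precedes $y$, i.e.\ $x < y$. This is immediate from the definition of $<$.

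Then comes the main content: showing that an ordering respecting $L$ is respectful. Suppose $Rx \supsetneq Ry$ with $x,y \neq 0$. Then $Ry \subsetneq Rx$, so in the linear extension $L$ we have $Ry <_L Rx$. I want to produce some $\alpha \in R^\ast$ with $\alpha x < uy$ for all $u \in R^\ast$. Take any unit $\alpha$ — say $\alpha = 1$. Then for every unit $u$, the element $uy$ lies in the orbit $R^\ast y$, whose associated ideal is $Ry$, and $x$ lies in the orbit $R^\ast x$, whose associated ideal is $Rx$. Since $Ry <_L Rx$, the construction of $<$ forces every element of $R^\ast x$ to precede every element of $R^\ast y$; in particular $\alpha x = x < uy$. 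This gives exactly the condition in Definition~\ref{def:Respectful}. Combining: the order $<$ respects $L$, hence is respectful, hence exists — which proves the theorem.

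The only subtlety, and the place to be slightly careful, is making sure the partition into unit orbits is correctly matched with the ideals: this is precisely what Corollary~\ref{cor:UnitOrbits} (via Proposition~\ref{prop:AssociateGenerators}) delivers — the orbit of $a$ under $R^\ast$ is exactly the set of generators of $Ra$, and distinct cyclic left ideals give disjoint orbits. I do not expect a genuine obstacle here; the argument is essentially bookkeeping once the right viewpoint (order the orbits by the linear-extension position of their ideals) is adopted. If one wanted to be thorough, one could also note explicitly that since $R$ is finite, the poset of left ideals is finite, so the linear extension genuinely exists, and the whole construction is effective.
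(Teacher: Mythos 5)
Your proposal is essentially identical to the paper's proof: both fix a linear extension of the poset of left ideals, order elements so that a generator of a $<_L$-larger ideal always precedes a generator of a $<_L$-smaller one (with arbitrary internal order on each set of generators, i.e.\ each $R^\ast$-orbit), and then verify respectfulness by taking $\alpha=1$ and noting $Ruy=Ry<_L Rx$ for all units $u$. The only cosmetic difference is that you phrase the construction in terms of $R^\ast$-orbits via Corollary~\ref{cor:UnitOrbits}, while the paper writes directly in terms of the generator sets $\Gamma_I$ of each nonzero ideal $I$ — these are the same objects.
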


\begin{proof}
Let~$R$ be a finite principal left ideal ring and $(L,<_L)$ a linear extension of the poset of left ideals of~$R$.
For each nonzero left ideal~$I$ let~$\Gamma_I$ be the set of its generators.
On each set~$\Gamma_I$ fix an arbitrary total order, denoted by $<_I$.
Then all of this induces an ordering on~$R$ via
\[
   x<y:\Longleftrightarrow \big[Ry<_L Rx\ \text{ or }\ (Ry= Rx=:I\text{ and }x<_I y)\big].
\]
This ordering is respectful and in fact respects $L$.
To see this let $x,y\in R\setminus\{0\}$ such that $Rx\supsetneq Ry$.
Then $Rx\supsetneq Ry=Ruy$ for all $u\in R^*$, hence $Ruy<_L Rx$.
By construction $x<uy$ for all $u\in R^\ast$, which is what we wanted.
\end{proof}

The proof of Theorem~\ref{thm:RespectfulExists} shows that respecting a linear extension is much stronger
than just being respectful.
Instead of showing the existence of \textit{some} unit $\alpha\in R^\ast$ such that $\alpha x<uy$ for all $u\in R^\ast$,
we actually showed that we may pick $\alpha=1$.
This is always the case for orderings that respect a linear extension of the poset of left ideals.
For general respectful orderings, other values of $\alpha$ may be necessary.

\begin{ex}\label{E-RespOrder}\
(a)
Consider the ring~$\Z_{12}$ and its poset of ideals
\begin{center}
\begin{tikzpicture}
\node (1) at (0,3) {$\Z_{12}$};
\node (2) at (-1,2) {$(2)$};
\node (3) at (1,2) {$(3)$};
\node (4) at (-1,1) {$(4)$};
\node (6) at (1,1) {$(6)$};
\node (0) at (0,0) {$(0)$};
\draw (2) -- (1) -- (3) -- (6) -- (2) -- (4) -- (0) -- (6);
\end{tikzpicture}
\end{center}
with linear extension $L:\ (0)< (6)<(4)<(3)<(2)<(1)=\Z_{12}$.
Then $1<5<7<11<2<10<3<9<4<8<6<0$ is an ordering of $\Z_{12}$ that respects~$L$.
Note that 1, 5, 7, and 11 generate~$\Z_{12}$; 2 and 10 generate~$(2)$; 3 and 9 generate~$(3)$; 4 and 8 generate~$(4)$;
6 generates~$(6)$.
Recall from Corollary \ref{cor:UnitOrbits} that the generating sets for each ideal are exactly the orbits under
multiplication from $R^\ast$.
So our linear extension induces an order on the set of $R^*$-orbits as well as an order on each $R^\ast$-orbit itself.

(b)
On any integer residue ring~$\Z_m$, the natural order $0<1<\ldots<m-1$ is respectful.
This follows from the fact that the poset of ideals is anti-isomorphic to the poset of positive divisors of~$m$.
However, if~$\Z_m$ is not a field then this order does not respect any linear extension because
$(m-1)=(-1)=\Z_m\supsetneq I$ for any proper ideal~$I$, but
$m-1>a$ for all $a\in\{0,\ldots,m-2\}$  in the natural order.

(c)
The poset of left ideals of $R=M_2(\F_2)$ is given in the diagram below.
\begin{center}
\begin{tikzpicture}
\node (1) at (0,4) {$R\twoby{1}{0}{0}{1}$};
\node (2) at (-2,2) {$R\twoby{1}{0}{0}{0}$};
\node (3) at (0,2) {$R\twoby{1}{1}{0}{0}$};
\node (4) at (2,2) {$R\twoby{0}{1}{0}{0}$};
\node (0) at (0,0) {$R\twoby{0}{0}{0}{0}$};
\draw (1) -- (2) -- (0) -- (3) -- (1) -- (4) -- (0);
\end{tikzpicture}
\end{center}
We choose a linear extension $L$, which induces the following ordering on the nonzero left $R^\ast$-orbits.
\[
  R^\ast\twoby{0}{1}{0}{0}<_L R^\ast\twoby{1}{0}{0}{0}<_L R^\ast\twoby{1}{1}{0}{0}<_L
  R^\ast\twoby{1}{0}{0}{1}.
\]
Fixing a total order within each $R^*$-orbit, we obtain a respectful ordering on~$R$.
For instance, with the zero matrix as the minimal element we may obtain
\begin{align*}
\twoby{0}{0}{0}{0}
&<\twoby{1}{0}{0}{1}
<\twoby{0}{1}{1}{0}
<\twoby{0}{1}{1}{1}
<\twoby{1}{0}{1}{1}
<\twoby{1}{1}{0}{1}
<\twoby{1}{1}{1}{0}\\
&<\twoby{1}{1}{0}{0}
<\twoby{0}{0}{1}{1}
<\twoby{1}{1}{1}{1}
<\twoby{1}{0}{0}{0}
<\twoby{0}{0}{1}{0}
<\twoby{1}{0}{1}{0}\\
&<\twoby{0}{1}{0}{0}
<\twoby{0}{0}{0}{1}
<\twoby{0}{1}{0}{1}.
\end{align*}

(d)
A total order on  a finite chain ring~$R$ with ideals~\eqref{e-idealChain} is respectful if and only if the following is satisfied:
for any $0\leq i<j\leq e-1$ there is some $\alpha\in R^\ast$ such that $\alpha\gamma^i<u\gamma^j$ for all $u\in R^\ast$.
\end{ex}

We now use an ordering on~$R$ to define a lexicographic ordering on~$R^n$.
It is based on a total order of~$R$ together with an ordered basis of~$R^n$.
The total order need not be respectful.
The latter will only be necessary in the next section for the greedy algorithm to produce desirable results.

\begin{defn}\label{def:LexiOrder}
Let~$R$ be a finite principal left ideal ring with a total order~$<$.
Fix an ordered basis $B=\{b_1,\ldots,b_n\}$ of the free left $R$-module $R^n$.
Let $V_0=\{0\}$ and for $1\leq i\leq n$ let $V_i=R\{b_1,\ldots,b_i\}$ be the submodule of $R^n$ generated by the first~$i$
vectors in $B$.
Thus $V_i=Rb_i+V_{i-1}$.
We define the following \emph{lexicographic ordering} on~$R^n$ and denote it also by $<$:
\begin{arabiclist}
\item If $x\in V_{i-1}$ and $y\in V_i\setminus V_{i-1}$, then set $x<y$.
\item If $x$ and $y$ are distinct vectors in $V_i\setminus V_{i-1}$, then write $x=\sum_{j=1}^i x_j b_j$ and
      $y=\sum_{j=1}^i y_j b_j$, where $x_i,y_i\in R$ are nonzero.
      Let $k\in\{1,\ldots,i\}$ be the highest index such that $x_k\not=y_k$.
      If $x_k<y_k$, then set $x<y$. If $y_k<x_k$, set $y<x$.
\end{arabiclist}
We call $V_i\setminus V_{i-1}$ the \emph{$i$-th level set} of the ordered space~$R^n$.
\end{defn}

The ordering defined above can be described in easier terms as follows.
Denote by $<_{\rm lex}$ the lexicographic ordering on~$R^n$ induced by the respectful ordering~$<$ on~$R$.
Then for $x\in V_l\setminus V_{l-1}$ and $y\in V_m\setminus V_{m-1}$ we have
\begin{equation}\label{e-SortLevel}
  x<y\Longleftrightarrow \big[ l<m\ \text{ or }\ (l=m \text{ and }(x_n,\ldots,x_1)<_{\rm lex}(y_n,\ldots,y_1))\big],
\end{equation}
where $(x_1,\ldots,x_n),\,(y_1,\ldots,y_n)$ are the coefficient
vectors of~$x$ and~$y$  with respect to the chosen basis~$B$, that is $x=\sum_{j=1}^n x_j b_j$ and $y=\sum_{j=1}^n y_j b_j$.
In the case where~$0$ is the least element of the ordered ring~$R$, this even simplifies to
\[
  x<y\Longleftrightarrow (x_n,\ldots,x_1)<_{\rm lex}(y_n,\ldots,y_1).
\]
Yet in other terms, we have the ordering of levels
\begin{equation}\label{e-SortWithinLevel}
  \{0\}=V_0\;<\;V_1\setminus V_0\;<\;V_2\setminus V_1\;<\,\ldots\,<\; V_n\setminus V_{n-1},
\end{equation}
where each level set is ordered according to~\eqref{e-SortLevel}.
Thus the ordering on~$R$ only dictates the ordering within each level set, but not the ordering between the levels.
The latter is dictated by the chosen ordered basis~$B$.

\begin{ex}\label{ex:Z4NaturalOrder}\
(a)
Let $\Z_4$ be endowed with the natural ordering $0<1<2<3$.
Let $B=\{001,010,100\}$ be the reverse standard basis for $\Z_4^3$.
Then Definition \ref{def:LexiOrder} leads to the natural lexicographic ordering
\[
 000<001<002<003<010<011<012<013<020<021<022<023<\ldots<333.
\]
This is simply the lexicographic ordering because for the reverse standard basis the reversed coefficient vector of~$x$ (see~\eqref{e-SortLevel})
is simply~$x$ itself.

(b)
Let now $\Z_4$ be equipped with the ordering $1<3<2<0$, which respects the chain of ideals $(1)\supsetneq(2)\supsetneq(0)$.
Let $B=\{100,010,001\}$ be the standard basis for $\Z_4^3$.
Then the lexicographic ordering from Definition \ref{def:LexiOrder} is given by~\eqref{e-SortWithinLevel} and the internal ordering:
\begin{align*}
  V_0\quad &=\{000\},\\
  V_1\setminus V_0&=\{100<300<200\},\\
  V_2\setminus V_1&=\{110<310<210<010<130<330<230<030<120<320<220<020\},\\
  V_3\setminus V_2&=\{111<311<211<011<131<\ldots<002\}.
\end{align*}
Notice that the zero element acts here in two different ways: it ``naturally'' sorts the levels $V_i\setminus V_{i-1}$,
but dictates an unusual sorting within each level.
\end{ex}

\section{The Greedy Algorithm}\label{SS-Alg}
We now introduce a greedy algorithm that produces codes over a given finite principal left ideal ring such that all (nonzero) codewords have a given
pre-specified property.
The algorithm generalizes the ones presented by Van Zanten and Suparta in~\cite{VZS05} for codes over finite fields and by Guenda et al.
in~\cite{GGS14} for codes over finite chain ring.

Throughout, let~$R$ be a finite principal left ideal ring.
Moreover, let~$\Gamma$ be a fixed set of generators of the nonzero left ideals in~$R$.
The following algorithm itself does not need the respectfulness of the ordering on~$R$, but the properties of the resulting codes heavily rely on it.
Thus we restrict ourselves to respectful orderings on~$R$.

\begin{algorithm}\label{alg:Greedy}
Fix a respectful ordering $<$ on~$R$ and an ordered basis $B$ of~$R^n$.
Consider the resulting lexicographic ordering on the left $R$-module $R^n$ as in Definition \ref{def:LexiOrder}.
Let $P$ be a left multiplicative property on $R^n$.
\begin{arabiclist}
\item Put $C_0=\{0\}$. Set $i=1$.
\item Search for the first (smallest) vector $a_i\in V_i\setminus V_{i-1}$ such that $P[\gamma a_i+c]$ holds true for all $\gamma\in\Gamma$ and all $c\in C_{i-1}$.
\item $\sbt$ If such $a_i$ exists, let $C_i:=\{r a_i+c\mid r\in R,\,c\in C_{i-1}\}=Ra_i+C_{i-1}$.
      \\
      $\sbt$ If no such $a_i$ exists, let $C_i:=C_{i-1}$.
\item $\sbt$ If $i<n$, set $i:=i+1$ and return to Step~2.
      \\
      $\sbt$ If $i=n$, stop and output $C_n$.
\end{arabiclist}
We call $C_n$ a \emph{lexicode} (or \emph{lexicographic code}) with respect to the given ordering, basis, and property~$P$ and denote it by $C(<,B,P)$.
\end{algorithm}

The generated codes~$C_i$ clearly depend on the chosen basis~$B$, which determines the sets~$V_i$ and thus the level sets~$V_i\setminus V_{i-1}$,
as well as on the ordering on~$R$, which determines the ordering within the level sets.
Examples of this dependence will be provided in Section~\ref{SS-Exa}.

We wish to point out that we explicitly allow multiplicative properties~$P$ for which $P[0]$ is false.
While this may seem odd because we aim at constructing linear codes, this does indeed lead to interesting outcomes --
as we will show later.
Note that the algorithm always adds the zero vector to the code.

The following lemma shows that the selection criterion $P[\gamma a_i+c]$ for all $\gamma\in\Gamma,\,c\in C_{i-1}$ is sufficient to actually
guarantee $P[ra_i+c]$ for all $r\in R\,\backslash\,\{0\},\,c\in C_{i-1}$.
This implies that the resulting sets~$C_i$ do not depend on the choice of the generator set~$\Gamma$.
The use of~$\Gamma$ in the algorithm merely serves to reduce the number of tests in the selection step (Step~2.).
If~$R$ is a finite field, then we may choose $\Gamma=\{1\}$, and the algorithm reduces to Algorithm~A in
\cite{VZS05} by Van Zanten and Suparta.
If~$R$ is a finite chain ring with ideals as in~\eqref{e-idealChain} we may choose $\Gamma=\{1,\gamma,\gamma^2,\ldots,\gamma^{e-1}\}$,
and the algorithm equals Algorithm~A in \cite{GGS14} by Guenda et al.
As the proof of the following lemma shows, the multiplicativity of the property~$P$ is crucial.
See also Example~\ref{ex:counterex}(a) for a trivial counterexample showing how the lemma fails when~$P$ is not multiplicative.

\begin{lemma}\label{lem:GammaSufficient}
Let $P$ be a left multiplicative property on $R^n$, and $C$ a left submodule of $R^n$ such that $P[c]$ holds for all nonzero $c\in C$.
Let $x\in R^n$. Then
\[
    P[\gamma x+c]\text{ holds true for all }\gamma\in\Gamma,\,c\in C\Longleftrightarrow P[rx+c]\text{ holds true for all }r\in R\setminus\{0\},\,c\in C.
\]
\end{lemma}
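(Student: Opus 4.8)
The plan is to prove the nontrivial implication from left to right; the reverse implication is immediate since every $\gamma\in\Gamma$ lies in $R\setminus\{0\}$. So assume $P[\gamma x+c]$ holds for all $\gamma\in\Gamma$ and all $c\in C$, and fix an arbitrary $r\in R\setminus\{0\}$ and $c\in C$; we must show $P[rx+c]$. The key observation is that $Rr$ is a nonzero left ideal of $R$, hence $Rr=R\gamma$ for the (unique) generator $\gamma\in\Gamma$ of that ideal. By Proposition~\ref{prop:AssociateGenerators} applied to the module $M=R$ (or directly by Corollary~\ref{cor:UnitOrbits}), there is a unit $u\in R^\ast$ with $r=u\gamma$. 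Therefore $rx+c=u\gamma x+c$.

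Now I would like to factor out $u$ and use left multiplicativity of $P$. The obstacle is that $u\gamma x+c$ is not literally $u(\gamma x + c')$ unless $u^{-1}c\in C$, so I need $u^{-1}c\in C$ — but this holds precisely because $C$ is a left $R$-submodule and $u^{-1}\in R\subseteq R$, so $u^{-1}c\in C$. Set $c':=u^{-1}c\in C$. Then
\[
  rx+c = u\gamma x + c = u\gamma x + u u^{-1}c = u(\gamma x + c').
\]
Since $c'\in C$, the hypothesis gives $P[\gamma x+c']$. Applying left multiplicativity of $P$ with the unit $u$ yields $P[u(\gamma x+c')]=P[\gamma x+c']=\text{true}$, i.e. $P[rx+c]$ holds. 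As $r$ and $c$ were arbitrary, this completes the proof.

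A remark on where the hypotheses are used: the submodule property of $C$ is needed twice in spirit — once to know $c'=u^{-1}c\in C$, and (implicitly, in the intended application within Algorithm~\ref{alg:Greedy}) to know that $P[c]$ holds for all nonzero $c\in C$, though that latter fact is not actually invoked in this particular equivalence. The multiplicativity of $P$ is the crucial ingredient: without it, $P[u(\gamma x+c')]$ and $P[\gamma x+c']$ need not coincide, and the reduction from all $r\in R\setminus\{0\}$ to the finite set $\Gamma$ breaks down, as the forthcoming Example~\ref{ex:counterex}(a) shows. I do not anticipate a genuine obstacle here; the only subtlety worth stating explicitly is the passage $rx+c=u(\gamma x+u^{-1}c)$, which is where the left-module structure of $C$ and the unit-decomposition $r=u\gamma$ of an arbitrary nonzero ring element come together.
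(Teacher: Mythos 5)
Your proof is correct and follows essentially the same route as the paper's: write $r=u\gamma$ for a unit $u$ via Proposition~\ref{prop:AssociateGenerators}, then use left multiplicativity of~$P$ together with $u^{-1}c\in C$ to reduce $P[rx+c]$ to $P[\gamma x+u^{-1}c]$. Your side observation that the hypothesis ``$P[c]$ for all nonzero $c\in C$'' is not actually used in this equivalence is also accurate.
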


\begin{proof}
We only need to prove ``$\Longrightarrow$''.
For $r\in R\setminus\{0\}$, we have $Rr=R\gamma$ for some $\gamma\in\Gamma$.
Hence Proposition \ref{prop:AssociateGenerators} yields $r=u\gamma$ for some $u\in R^\ast$.
Then $P[rx+c]=P[u\gamma x+c]=P[\gamma x+u^{-1}c]$ for all $c\in C$ by multiplicativity of~$P$.
Since~$C$ is linear, the latter is true for all $c\in C$ by assumption, and the proof is complete.
\end{proof}

The next theorem generalizes \cite[Theorem 2.2]{VZ97} for lexicodes over $\F_2$, \cite[Theorem 2.2]{VZS05} for lexicodes over $\F_q$, and
\cite[Theorem 4]{GGS14} for lexicodes over finite chain rings.

\begin{theorem}\label{thm:PropertyHolds}
Consider Algorithm \ref{alg:Greedy}.
Then each set $C_i$ is a code, i.e., a submodule of $R^n$, and $P[x]$ is true for every nonzero codeword $x\in C_i$.
\end{theorem}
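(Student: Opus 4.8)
The proof proceeds by induction on $i$, with the statement ``$C_i$ is a submodule and $P[x]$ holds for all nonzero $x\in C_i$'' as the inductive hypothesis. The base case $C_0=\{0\}$ is vacuous. For the inductive step, assume $C_{i-1}$ is a submodule on which~$P$ holds (for nonzero elements). If Step~2 finds no admissible $a_i$, then $C_i=C_{i-1}$ and there is nothing to prove, so suppose $a_i$ exists. Then $C_i=Ra_i+C_{i-1}$ is a sum of submodules, hence a submodule; this disposes of the first assertion.

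For the second assertion, let $x\in C_i\setminus\{0\}$ and write $x=ra_i+c$ with $r\in R$ and $c\in C_{i-1}$. There are two cases. If $r=0$ then $x=c\in C_{i-1}$ and $x\neq 0$, so $P[x]$ holds by the inductive hypothesis. If $r\neq 0$, then I want to invoke Lemma~\ref{lem:GammaSufficient} applied to the submodule $C=C_{i-1}$ and the vector $x=a_i$: the selection criterion in Step~2 guarantees precisely that $P[\gamma a_i+c']$ holds for all $\gamma\in\Gamma$ and all $c'\in C_{i-1}$, which is the left-hand side of the equivalence in Lemma~\ref{lem:GammaSufficient}. Hence the right-hand side holds too, i.e.\ $P[r a_i+c']$ holds for all $r\in R\setminus\{0\}$ and all $c'\in C_{i-1}$; taking $c'=c$ yields $P[x]=P[ra_i+c]$, as desired.

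The only point requiring care — and the main (mild) obstacle — is verifying that the hypotheses of Lemma~\ref{lem:GammaSufficient} are genuinely met at the inductive step: the lemma requires $C_{i-1}$ to be a left submodule on which~$P$ holds for all nonzero elements, which is exactly the inductive hypothesis, and it requires~$P$ to be left multiplicative, which is a standing assumption in Algorithm~\ref{alg:Greedy}. I should also note that nothing here uses respectfulness of the ordering: respectfulness is irrelevant to the truth of the property~$P$ on the codewords and will only matter for later exhaustiveness/maximality statements. Finally, the conclusion for $C_n$ follows by applying the argument through $i=n$, and the independence of the $C_i$ from the choice of generator set~$\Gamma$ is a free byproduct, since the right-hand condition in Lemma~\ref{lem:GammaSufficient} does not mention~$\Gamma$.
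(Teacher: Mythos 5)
Your proof is correct and follows essentially the same route as the paper's: induction on $i$, with the key step being an invocation of Lemma~\ref{lem:GammaSufficient} applied to $C_{i-1}$ and the selected vector $a_i$. Your explicit split into the cases $r=0$ and $r\neq 0$, and your remark that respectfulness plays no role here, are just slightly more verbose renderings of what the paper's proof leaves implicit.
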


\begin{proof}
Left linearity of~$C_i$ is clear. Vacuously $P[x]$ holds for all nonzero $x\in C_0$.
Suppose now that $P[x]$ holds for all nonzero $x\in C_{i-1}$.
If $C_i=C_{i-1}$, then there is nothing to prove. Else let~$a_i$ be the selected vector from $V_i\setminus V_{i-1}$.
Then by Lemma~\ref{lem:GammaSufficient} $P[ra_i+c]$ holds true for all $r\in R\setminus \{0\}$ and all $c\in C_{i-1}$.
Since $C_i=Ra_i+C_{i-1}$, this establishes the desired result.
\end{proof}

Note that in Step~2 of Algorithm~\ref{alg:Greedy} we only select one (if any) vector~$a_i$ in the level $V_i\setminus V_{i-1}$, update
$C_{i-1}$ to $C_i:=Ra_i+C_{i-1}$,  and then move on to the next level $V_{i+1}\setminus V_i$.
The next theorem justifies abandoning the search through the rest of $V_i\setminus V_{i-1}$.
Indeed, as we will see, the respectfulness of the ordering on~$R$ guarantees that any vector $x\in V_i\setminus V_{i-1}$ such that $P[\gamma x+c]$ is true
for all $\gamma\in\Gamma$ and $c\in C_i$ is already in $C_i$.
Therefore this theorem generalizes the result of \cite[Theorem 2.1]{VZS05} for lexicodes over $\F_q$, and the result of
\cite[Lemma 3]{GGS14} for lexicodes over finite chain rings.

\begin{theorem}\label{thm:SkipCheck}
Consider Algorithm \ref{alg:Greedy} and the resulting nested codes $C_0\subseteq\ldots\subseteq C_n$.
Then every vector $x\in V_i\setminus V_{i-1}$ satisfying $P[\gamma x+c]=\text{true}$ for all $\gamma\in\Gamma$ and all $c\in C_i$, is already in $C_i$.
\end{theorem}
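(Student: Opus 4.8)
The plan is to argue by induction on~$i$, following the structure of the algorithm. The base case $i=0$ is vacuous since $V_0\setminus V_{-1}$ is empty (or $C_0=\{0\}$ contains all of $V_0$). For the inductive step, fix $i\geq 1$ and suppose the claim holds for all smaller indices. Let $x\in V_i\setminus V_{i-1}$ satisfy $P[\gamma x+c]=\text{true}$ for all $\gamma\in\Gamma$ and all $c\in C_i$; I want to show $x\in C_i$. Write $x=x_i b_i + x'$ with $x_i\in R\setminus\{0\}$ and $x'\in V_{i-1}$. The first observation is that such an $x$ being admissible forces the algorithm's Step~2 at level~$i$ to \emph{succeed}: the selected vector~$a_i$ exists, because $x$ itself (or rather, $x$ together with the inductive hypothesis applied to $C_{i-1}\subseteq C_i$) is a valid candidate — more carefully, since $C_{i-1}\subseteq C_i$, the condition $P[\gamma x+c]$ for all $c\in C_i$ in particular holds for all $c\in C_{i-1}$, so $x$ is a legal choice in Step~2 and hence $a_i$ exists and $C_i=Ra_i+C_{i-1}$.

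The heart of the argument is to compare $x$ with $a_i$. Write $a_i = \alpha b_i + a'$ with $\alpha\in R\setminus\{0\}$ and $a'\in V_{i-1}$. Since $a_i$ is the \emph{smallest} admissible vector in $V_i\setminus V_{i-1}$ and $x$ is admissible (for $C_{i-1}$), we have $a_i \leq x$; by the description of the lexicographic order in Definition~\ref{def:LexiOrder}, since both lie in the $i$-th level set and $b_i$ is the highest-index basis vector involved, this means either $\alpha < x_i$, or $\alpha = x_i$ and $a'\leq x'$ in $V_{i-1}$. I will split into the cases $R\alpha \supseteq Rx_i$ versus $R\alpha \subsetneq Rx_i$, and this is where respectfulness of the ordering enters decisively. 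If $Rx_i \supsetneq R\alpha$, then by Definition~\ref{def:Respectful} there is a unit $\beta\in R^*$ with $\beta x_i < u\alpha$ for all $u\in R^*$; but by Proposition~\ref{prop:AssociateGenerators}, $\alpha$ is itself a unit multiple of some fixed generator, and one deduces that $\beta x_i < \alpha$ — more precisely one gets a unit multiple $u_0 x$ of $x$ whose $b_i$-coefficient strictly precedes $\alpha$, so $u_0 x < a_i$ in the level set. Now $u_0 x$ is still admissible for $C_{i-1}$ (by left multiplicativity of $P$ and linearity of $C_{i-1}$, exactly as in the proof of Lemma~\ref{lem:GammaSufficient}), contradicting minimality of $a_i$. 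Hence we must have $R x_i \subseteq R\alpha$, i.e. $x_i = u\alpha$ for some... wait — more carefully, $Rx_i\subseteq R\alpha$ together with $\alpha\leq x_i$ (or rather the order comparison) should force $Rx_i = R\alpha$; one then uses Proposition~\ref{prop:AssociateGenerators} to write $x_i = u\alpha$ with $u\in R^*$.

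Having established $x_i = u\alpha$ with $u\in R^*$, consider $y := x - u a_i = u(x_i b_i + u^{-1}x') - u a_i$... let me instead set $y := x - u\,a_i$. Its $b_i$-coefficient is $x_i - u\alpha = 0$, so $y\in V_{i-1}$. Since $u a_i \in Ra_i \subseteq C_i$, it suffices to show $y\in C_i$; and since $y\in V_{i-1}$, by the inductive hypothesis (applied to level $i-1$, noting $C_{i-1}\subseteq C_i$) it is enough to check that $y$ satisfies $P[\gamma y + c]=\text{true}$ for all $\gamma\in\Gamma$ and all $c\in C_{i-1}$ — but here I need to be a little careful about which code the inductive hypothesis quantifies over. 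The cleanest route: for $\gamma\in\Gamma$ and $c\in C_{i-1}\subseteq C_i$, write $\gamma y + c = \gamma x - \gamma u a_i + c = \gamma x + (c - \gamma u a_i)$, and since $a_i\in C_i$ is a codeword and $C_i$ is a module, $c-\gamma u a_i \in C_i$; hence $P[\gamma y + c] = P[\gamma x + (\text{element of } C_i)]=\text{true}$ by hypothesis on $x$. Wait — this shows $y$ is admissible against $C_i$, but to invoke induction I need admissibility against $C_i$ for a level-$(i-1)$ vector, and the inductive statement says exactly: every $z\in V_{i-1}\setminus V_{i-2}$ admissible against $C_{i-1}$ lies in $C_{i-1}$. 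So I must also handle the subtlety that $C_i$ may be strictly larger than $C_{i-1}$. I expect the resolution is that the theorem should really be proven simultaneously with, or phrased against, the final code $C_i$ at the \emph{same} level, and one peels off $u a_i$ and then recurses down the chain $C_i = Ra_i + C_{i-1}$, repeatedly subtracting module elements until landing in $V_0=\{0\}$. The main obstacle, then, is bookkeeping the interplay between "admissible against $C_i$" and "admissible against $C_{i-1}$" correctly through the induction; the genuinely new mathematical input is the single application of respectfulness in the case analysis above, which rules out $x$ having a $b_i$-coefficient generating a strictly smaller ideal than $a_i$'s — without respectfulness this step collapses and the theorem is false.
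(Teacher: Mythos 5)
Your outline tracks the paper's proof in broad strokes — induct on $i$, compare the $b_i$-coefficients of $x$ and $a_i$, use respectfulness to preclude $a_i$'s coefficient generating a strictly larger ideal, peel off a multiple of $a_i$ to drop into $V_{i-1}$, and recurse. Your final reduction step, which you flag as uncertain, is actually fine as stated: admissibility against $C_i$ immediately implies admissibility against the smaller code $C_{i-1}$ (and against every $C_j\subseteq C_i$), so the inductive hypothesis applies to $y=x-ua_i$ without any further bookkeeping — this is exactly the computation $P[\gamma v+c']=P[\gamma x+c]$ that the paper performs.

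The genuine gap is in your case analysis. You split into ``$R\alpha\supseteq Rx_i$'' versus ``$R\alpha\subsetneq Rx_i$,'' but over a general principal left ideal ring these two cases are \emph{not} exhaustive: the ideals $R\alpha$ and $Rx_i$ need not be comparable. (They are always comparable in a chain ring, which is why this step is invisible in the chain-ring proof of~\cite{GGS14}.) Respectfulness only gives you leverage once you have a containment to work with, so in the incomparable case your argument produces nothing. This is precisely the obstacle the paper's proof is engineered to overcome. The paper sets $Rp_i+Rq_i=Rd$ and then invokes the stable range property (Theorem~\ref{thm:UnitLinearComb}) to write $ap_i+bq_i=d$ with $b$ a \emph{unit}; the auxiliary vector $y=aa_i+bx$ is then admissible against $C_i$ (because $bx$ is a unit multiple of $x$ and $aa_i\in C_i$), lies in $V_i\setminus V_{i-1}$, and has $b_i$-coefficient $d$ satisfying $Rd\supseteq Rp_i$. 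Now respectfulness applies to the comparable pair $Rd\supseteq Rp_i$: strict containment would produce a unit multiple of $y$ strictly below $a_i$, contradicting minimality, hence $Rd=Rp_i$ and therefore $Rq_i\subseteq Rp_i$. Only at that point do you get $x=\beta a_i+v$ with $v\in V_{i-1}$ — and note $\beta$ need not be a unit, so your attempt to force $Rx_i=R\alpha$ and $x_i=u\alpha$ with $u\in R^\ast$ is both unjustified and unnecessary; mere containment $Rx_i\subseteq R\alpha$ suffices. In short: without the auxiliary vector $y$ and the stable range argument, your proof establishes the theorem only for finite chain rings.
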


\begin{proof}
We induct on $i$. For the base case, the statement is trivially true because $V_0=\{0\}=C_0$.

For $1\leq i\leq n$, assume the statement holds for all indices less than $i$. Suppose $x\in V_i\setminus V_{i-1}$ is such that
$P[\gamma x+c]$ holds true for all $\gamma\in\Gamma$ and all $c\in C_i$.
Then there must have been some selected vector $a_i\in V_i\setminus V_{i-1}$ such that $P[\gamma a_i+c]$ holds true for all
$\gamma\in\Gamma$ and $c\in C_{i-1}$.
Thus $C_i=Ra_i+C_{i-1}$.

Write $a_i=p_ib_i+\sum_{l=1}^{i-1}p_lb_l$ and $x=q_ib_i+\sum_{l=1}^{i-1}q_lb_l$, where $p_l,q_l\in R$ and $p_i\neq0\neq q_i$.
Then $Rp_i+Rq_i=Rd$ for some $d\in R$, since $R$ is a principal  left ideal ring.
Hence there exist $a,b\in R$ such that $ap_i+bq_i=d$, and by
Theorem~\ref{thm:UnitLinearComb}, we may even assume that $b$ is a unit. Let $y=a a_i+bx$.
Then for every $u\in R^\ast$ we have $\gamma ub\not=0$ because $ub$ is not a zero divisor.
Hence by our assumption on $x$ and Lemma~\ref{lem:GammaSufficient}
\begin{equation}\label{e-Pguy}
    P[\gamma u y+c]=P[\gamma ubx+(\gamma ua a_i+c)]\ \text{ holds true for all }\gamma\in\Gamma,\,u\in R^*,\,c\in C_i,
\end{equation}
since $\gamma ua a_i+c\in C_i$.
Now observe that
\[
  y=a a_i+bx
  =\Big(a p_ib_i+a\sum_{l=1}^{i-1}p_lb_l\Big)+\Big(bq_ib_i+b\sum_{l=1}^{i-1}q_lb_l\Big)=db_i+\sum_{l=1}^{i-1}(ap_l+bq_l)b_l.
\]
By construction, $Rd \supseteq Rp_i$. If $Rd\supsetneq Rp_i$, then our respectful ordering dictates that there is some
$\alpha\in R^\ast$ such that $\alpha d<p_i$.
Thus $\alpha y<a_i$.
But $P[\gamma\alpha y+c]$ holds true for all $\gamma\in\Gamma,\,c\in C_i$ by~\eqref{e-Pguy},
so $\alpha y$ would have been selected instead of $a_i$, a contradiction.
Hence we must have $Rd =Rp_i$, and thus $Rq_i\subseteq Rd=Rp_i$.
So, there exists some $\beta\in R\setminus\{0\}$ such that $\beta p_i=q_i$.
Then
\begin{equation}\label{e-xRel}
    x=\beta a_i+v\ \text{ for some }\ v\in V_{i-1}.
\end{equation}
Now we are ready to show that $x\in C_i$. We will do so by proving that $v\in C_{i-1}$.
Let $\gamma\in\Gamma$ and $c'\in C_{i-1}$.
Define $c:=-\gamma\beta a_i+c'$. Then $c\in C_i$ and
\[
  P[\gamma v+c']=P[\gamma(x-\beta a_i)+c+\gamma\beta a_i]=P[\gamma x+c],
\]
hence $P[\gamma v+c']$ holds true by assumption on~$x$.
Now our induction hypothesis implies that $v\in C_{i-1}$ and thus $x=\beta a_i+v$ is in $C_i$, as desired.
\end{proof}

In the proof of Theorem \ref{thm:SkipCheck}, we introduced the vector~$y$ for the sole purpose of showing that the ideals~$Rp_i$
and~$Rq_i$ are comparable in the poset of left ideals.
For the case of finite chain rings, all left ideals are comparable
and the containment $Rq_i\subseteq Rp_i$ follows immediately from the
respectful ordering, so the proof becomes greatly simplified.

As the proof above suggests, the existence of $\beta\in R\setminus\{0\}$ and $v\in V_{i-1}$ such that $x=\beta a_i+v$ is not
trivial over rings (it is clearly always the case over fields).
Only the respectfulness of the ordering on~$R$ guarantees this step for principal left ideal rings, and
in Example~\ref{E-Z4isotropic}(b) we show that the theorem above is indeed not true if the ordering of~$R$
is not respectful.
For this reason our proof completes the one given in \cite[Lemma~3]{GGS14},
where this detail seems to have been overlooked since no specifics on the ordering of the ring elements are given.
It  seems, however, that only respectful orderings were used in the examples in~\cite{GGS14}.

In Example~\ref{ex:counterex} we show that the previous theorem also fails if either the property~$P$ is not left multiplicative
or the ring is not a principal left ideal ring.

The examples in the next section suggest that the use of a respectful ordering in Algorithm \ref{alg:Greedy} produces large codes.
As we show next, these codes are in fact maximal if $P[0]$ is true.
The maximality in the sense of the following theorem is not true if $P[0]$ is false; see Example~\ref{ex:Z4Lee}.
But we do obtain a certain analogy for the case where $P[0]$ is false, as we will show below.
Recall from Corollary~\ref{cor:TogglePzero} that we may toggle the value of $P[0]$ as desired.
For instance, we may overwrite the value for the familiar property $[P[x]\Longleftrightarrow \wtH(x)\geq\delta]$ and toggle $P[0]$ to true.

\begin{theorem}\label{thm:TrueMaximal}
Let~$<$ and~$B$ be as in Algorithm \ref{alg:Greedy} and let~$P$ be a left multiplicative property such that~$P[0]$ is true.
Then the lexicode $C(<,B,P)$ is maximal (with respect to inclusion) in the poset of all codes satisfying~$P$.
\end{theorem}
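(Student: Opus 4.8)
The plan is to argue by contradiction. Suppose $C(<,B,P)$ is not maximal, so there is a code $D$ with $C(<,B,P)\subsetneq D$ and $P[d]$ true for every nonzero $d\in D$ (recall $P[0]$ is true by hypothesis, so this is the same as saying $P$ holds on all of $D$). Pick $x\in D\setminus C(<,B,P)$. Since $R^n=V_0\cup(V_1\setminus V_0)\cup\cdots\cup(V_n\setminus V_{n-1})$, the vector $x$ lies in some level set $V_i\setminus V_{i-1}$. The idea is to show that $x$ satisfies the hypothesis of Theorem~\ref{thm:SkipCheck}, namely that $P[\gamma x+c]$ is true for all $\gamma\in\Gamma$ and all $c\in C_i$; then Theorem~\ref{thm:SkipCheck} forces $x\in C_i\subseteq C_n=C(<,B,P)$, a contradiction.

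To verify that hypothesis, first I would observe that for each $c\in C_i\subseteq C(<,B,P)\subseteq D$ and each $\gamma\in\Gamma$, the vector $\gamma x+c$ lies in $D$ because $D$ is a left submodule. Hence either $\gamma x+c=0$, in which case $P[\gamma x+c]$ holds because $P[0]$ is true, or $\gamma x+c$ is a nonzero element of $D$, in which case $P[\gamma x+c]$ holds by the assumption on $D$. Either way $P[\gamma x+c]$ is true, which is exactly what Theorem~\ref{thm:SkipCheck} requires. This is the crucial place where the hypothesis $P[0]=\text{true}$ is used: without it, $\gamma x+c=0$ would not be covered, and the argument collapses (consistent with the remark in the paper that maximality fails when $P[0]$ is false).

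The remaining details are routine: one should note that $C(<,B,P)$ itself is a code satisfying $P$ by Theorem~\ref{thm:PropertyHolds} (together with $P[0]$ true), so it is a legitimate element of the poset in question, and the statement ``maximal with respect to inclusion'' is then precisely the assertion that no strictly larger $D$ exists. I do not anticipate a genuine obstacle here; the entire weight of the argument has already been carried by Theorem~\ref{thm:SkipCheck}, whose proof is where the respectfulness of the ordering and the stable-range property of $R$ were essential. The only point to be careful about is the bookkeeping of indices — making sure that the level set containing $x$ is the same index $i$ for which we invoke ``$x\in C_i$'' — but this is immediate from Definition~\ref{def:LexiOrder} since the $V_i$ partition $R^n$ into the disjoint level sets $V_i\setminus V_{i-1}$.
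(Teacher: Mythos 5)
Your proposal is correct and follows essentially the same route as the paper: contradiction, choosing $x\in D\setminus C(<,B,P)$ in some level set $V_i\setminus V_{i-1}$, verifying that $P[\gamma x + c]$ holds for all $\gamma\in\Gamma$ and $c\in C_i$ (with $P[0]=\text{true}$ covering the case $\gamma x + c = 0$), and then invoking Theorem~\ref{thm:SkipCheck} to force $x\in C_i\subseteq C_n$. You merely spell out the case split behind ``$P[\gamma x+c]$ holds'' a bit more explicitly than the paper does, but the argument is identical.
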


\begin{proof}
Recall the codes~$C_i$ from Algorithm~\ref{alg:Greedy}.
Suppose contrarily that there is some linear code $C$ satisfying $P$ such that $C_n\subsetneq C\subseteq R^n$.
Let $x\in C\setminus C_n$.
Then $P[\gamma x+c]$ holds for all $\gamma\in\Gamma,\,c\in C_n$ by assumption.
Since $x$ lies in $V_i\setminus V_{i-1}$ for some $i=1,\ldots,n$ and $P[\gamma x+c]$ holds for all $\gamma\in\Gamma,\,c\in C_i$ (even if $\gamma x+c=0$),
Theorem \ref{thm:SkipCheck} implies that $x\in C_i\subseteq C_n$, a contradiction.
\end{proof}

We now turn to the case where $P[0]$ is false.

\begin{theorem}\label{thm:FalseFree}
Let~$<$ and~$B$ be as in Algorithm \ref{alg:Greedy} and let~$P$ be a left multiplicative property such that $P[0]$ is  false.
Then each code~$C_i$ generated by Algorithm \ref{alg:Greedy} is free, and the selected vectors form a basis for $C_i$.
\end{theorem}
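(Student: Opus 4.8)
The plan is to induct on~$i$, proving simultaneously that $C_i$ is free and that the subset of $\{a_1,\ldots,a_i\}$ consisting of the vectors actually selected by the algorithm (i.e.\ those indices $j\le i$ at which the first bullet of Step~3 was executed) is a basis of~$C_i$. The base case $i=0$ is immediate, since $C_0=\{0\}$ is free with the empty basis.

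For the inductive step, assume the claim for $C_{i-1}$, with basis $\mathcal{B}_{i-1}\subseteq\{a_1,\ldots,a_{i-1}\}$. If no vector is selected at step~$i$, then $C_i=C_{i-1}$ and there is nothing to do. Otherwise a vector $a_i\in V_i\setminus V_{i-1}$ is selected, so $P[\gamma a_i+c]$ holds for all $\gamma\in\Gamma$ and all $c\in C_{i-1}$; by Lemma~\ref{lem:GammaSufficient} this upgrades to $P[ra_i+c]$ for all $r\in R\setminus\{0\}$ and all $c\in C_{i-1}$. Since $P[0]$ is false, this forces $ra_i+c\neq0$ for every $r\in R\setminus\{0\}$ and every $c\in C_{i-1}$. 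Two consequences follow. First, taking $c=0$ gives $ra_i\neq0$ for all $r\in R\setminus\{0\}$, so the left annihilator of~$a_i$ is trivial and $r\mapsto ra_i$ is an $R$-module isomorphism onto $Ra_i$; thus $Ra_i$ is free of rank~$1$ with basis $\{a_i\}$. Second, if $ra_i\in C_{i-1}$ with $r\neq0$, then $ra_i+(-ra_i)=0$ with $-ra_i\in C_{i-1}$, contradicting the previous sentence; hence $Ra_i\cap C_{i-1}=\{0\}$.

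Consequently $C_i=Ra_i+C_{i-1}=Ra_i\oplus C_{i-1}$, a direct sum of free modules, hence free, and $\mathcal{B}_i:=\mathcal{B}_{i-1}\cup\{a_i\}$ is a basis of~$C_i$, a basis of a direct sum being the union of bases of the summands (using that $\{a_i\}$ is a basis of $Ra_i$). This completes the induction, and evaluating at $i=n$ yields the theorem. The argument is essentially bookkeeping once the crucial observation is in place — that the falsity of $P[0]$, combined with Lemma~\ref{lem:GammaSufficient}, forces $ra_i\notin C_{i-1}$ for all $r\neq0$; the only point requiring care is tracking which indices actually contribute a generator versus those where $C_i=C_{i-1}$.
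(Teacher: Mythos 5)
Your proof is correct and essentially matches the paper's: both hinge on combining Lemma~\ref{lem:GammaSufficient} with the falsity of $P[0]$ to force $ra_i+c\neq 0$ for all $r\in R\setminus\{0\}$ and $c\in C_{i-1}$. The paper packages this as a single contradiction drawn from a supposed nontrivial linear relation among the selected vectors (isolating the top nonzero coefficient), while you unfold the same observation into an induction establishing $C_i=Ra_i\oplus C_{i-1}$ with $Ra_i\cong R$ at each selection step; the two phrasings are equivalent.
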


\begin{proof}
Let $a_{j_1}<\ldots<a_{j_k}$ be the vectors selected by Algorithm \ref{alg:Greedy} to generate $C_i$.
Suppose that the vectors are linearly dependent, say
$\sum_{l=1}^{t}\lambda_l a_{j_l}=0$
for some scalars $\lambda_l\in R$ with $\lambda_t\not=0$.
Note that $a_{j_1}<\ldots<a_{j_t}$ generate some $C_{i'}$ and $a_{j_1}<\ldots<a_{j_{t-1}}$ are in $C_{i'-1}$.
Lemma \ref{lem:GammaSufficient} tells us that $P[ra_{j_t}+c]$ holds true for every $r\in R\setminus\{0\}$ and $c\in C_{i'-1}$.
In particular $P[\lambda_ta_{j_t}+\sum_{l=1}^{t-1}\lambda_l a_{j_l}]$ is true, contradicting that $P[0]$ is false.
Therefore the vectors $a_{j_1},\ldots,a_{j_k}$ form a linearly independent set.
Since by construction~$C_i$ is generated by these vectors, we obtain the desired result.
\end{proof}

We now obtain the analogue of Theorem~\ref{thm:TrueMaximal}.
\begin{theorem}\label{thm:FalseMaxFree}
If~$<$ is a respectful ordering of~$R$ and~$P$ is a left multiplicative property where $P[0]$ is false,
then the code $C:=C(<,B,P)$ generated by Algorithm~\ref{alg:Greedy} is maximal  (with respect to inclusion) in the poset of all free
codes satisfying~$P[x]$ for all nonzero $x\in C$.
\end{theorem}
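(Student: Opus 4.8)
The plan is to mimic the proof of Theorem~\ref{thm:TrueMaximal}, but to restrict attention to free codes and to be careful about the fact that $P[0]$ is now false, so that adding $x$ to $C_n$ and closing up under addition could in principle create the zero vector as a nontrivial combination. Suppose for contradiction that there is a free code $D$ satisfying $P[x]$ for all nonzero $x\in D$ with $C:=C_n\subsetneq D\subseteq R^n$. Pick $x\in D\setminus C_n$ and let $i$ be the level index with $x\in V_i\setminus V_{i-1}$. First I would argue that $P[\gamma x+c]$ holds for all $\gamma\in\Gamma$ and all $c\in C_i$: for such $\gamma$ and $c$, we have $\gamma x+c\in D$ (since $C_i\subseteq C_n\subseteq D$ and $D$ is linear), and I need to rule out $\gamma x+c=0$; if $\gamma x+c=0$ then $\gamma x=-c\in C_i\subseteq C_n$, and I would then invoke Theorem~\ref{thm:SkipCheck} applied not to $x$ directly but to deduce $x\in C_n$ as follows — actually the cleaner route is: if no such bad $(\gamma,c)$ occurs, $P[\gamma x+c]$ is genuinely true (not vacuously), and Theorem~\ref{thm:SkipCheck} gives $x\in C_i\subseteq C_n$, a contradiction. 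So the only obstruction is the possibility $\gamma x + c = 0$ for some $\gamma\in\Gamma$, $c\in C_i$.

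The main obstacle is precisely this case $\gamma x+c=0$. Here I would exploit freeness of $D$. By Theorem~\ref{thm:FalseFree} the selected vectors $a_{j_1},\dots,a_{j_k}$ form a basis of $C_n$, and since $C_n\subseteq D$ with $D$ free and finite, $C_n$ is a direct summand of $D$ by Theorem~\ref{T-FLPIR} (both are free of finite rank over the finite principal left ideal ring $R$); extend the basis of $C_n$ by further vectors of $D$ to a basis of $D$, and note $x\in D\setminus C_n$ has at least one nonzero coordinate outside the $C_n$-part. Now $\gamma x + c = 0$ with $c\in C_i\subseteq C_n$ would force $\gamma x\in C_n$, and writing $x$ in the basis of $D$, the coefficient of $\gamma x$ along the new basis vectors is $\gamma\cdot(\text{nonzero coordinate of }x)$; this lies in $C_n$ only if it is zero, i.e. $\gamma$ annihilates that coordinate. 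But $\gamma$ is a generator of a nonzero left ideal, and I would like to pass to a different generator $\gamma'$ of the same ideal (or simply to a different element of $\Gamma$) for which the test still must hold by left multiplicativity and Lemma~\ref{lem:GammaSufficient}, arranging that $\gamma' x + c'\ne 0$ for every $c'\in C_i$. Concretely: the set $\{\gamma x : \gamma\in\Gamma\}$ together with $R$-linearity means the test $P[\gamma x+c]$ for all $\gamma\in\Gamma,c\in C_i$ is equivalent to $P[rx+c]$ for all $r\in R\setminus\{0\}, c\in C_i$ by Lemma~\ref{lem:GammaSufficient}; so it suffices to find a single $r\in R\setminus\{0\}$ with $rx+c\ne 0$ for all $c\in C_i$, because then $P[rx+c]$ is non-vacuously true, and running the equivalence backwards restores $P[\gamma x+c]$ for all $\gamma\in\Gamma$ with the $\gamma x+c=0$ scenario now known to be impossible for that witness — wait, that is not quite enough, since a different $\gamma$ could still give zero.

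So let me restructure the final step: I claim $\{r\in R: rx+c=0\text{ for some }c\in C_i\}$ is contained in a proper left ideal of $R$, whence some generator $\gamma\in\Gamma$ (namely a generator of a left ideal not contained in that one, or $\gamma=1$) avoids it, giving a genuinely true instance $P[\gamma x+c]$ with $\gamma x+c\ne 0$ for all $c\in C_i$; combined with left multiplicativity this actually upgrades to all of $\Gamma$ via Lemma~\ref{lem:GammaSufficient} once we know the property holds for one unit-generator of each ideal with the nonvanishing guaranteed — and for ideals where vanishing is unavoidable the corresponding combination $\gamma x + c$ equals some nonzero element of $D$ anyway unless it is $0$. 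The honest statement I would prove is: using the basis of $D$ extending that of $C_n$, if $x\notin C_n$ then for each $\gamma\in\Gamma$, $\gamma x\notin C_n$ (since $\gamma$ is not a zero divisor in the relevant free coordinate? — no, $\gamma$ may be a zero divisor), hence I restrict to the case $\gamma x+c=0 \Rightarrow \gamma x=-c\in C_i$, handle it by replacing $D$ with the free hull argument to conclude $\gamma x\in C_n$ is equivalent to $x\in C_n$ modulo torsion, and since $D$ is free there is no torsion, forcing $\gamma$ to kill the free coordinate, which I then show cannot happen for $\gamma=1\in\Gamma$. Taking $\gamma=1$: $1\cdot x+c=x+c$, and $x+c=0$ would give $x=-c\in C_i\subseteq C_n$, contradicting $x\notin C_n$. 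Therefore $P[x+c]$ is non-vacuously required to hold for all $c\in C_i$, Theorem~\ref{thm:SkipCheck} applies, and $x\in C_i\subseteq C_n$ — the desired contradiction. The only remaining care is that Theorem~\ref{thm:SkipCheck} needs $P[\gamma x+c]$ for \emph{all} $\gamma\in\Gamma$, so I still owe the other $\gamma$'s: for those, $\gamma x+c\in D$, and if $\gamma x+c\ne 0$ then $P$ holds because $D$ satisfies $P$ on nonzero elements; if $\gamma x+c=0$ then $P[\gamma x+c]=P[0]$ is false — but this cannot occur once we know $x\in C_i$ from the $\gamma=1$ case is already a contradiction, so logically the $\gamma=1$ instance alone suffices to reach the contradiction without ever needing the full hypothesis of Theorem~\ref{thm:SkipCheck}. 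Hence I would state Theorem~\ref{thm:SkipCheck} is applied with the observation that its hypothesis "for all $\gamma\in\Gamma$" can be weakened to "for $\gamma=1$ together with $P$-validity on nonzero vectors of an ambient free code" — or, more safely, I re-examine the proof of Theorem~\ref{thm:SkipCheck} and check that the $\gamma=1$ instance is the only one used when $x$ lies in a free extension. The main obstacle, then, is cleanly managing the interplay between "$P[0]$ false" and the hypothesis of Theorem~\ref{thm:SkipCheck}; the freeness of $D$ (via Theorem~\ref{T-FLPIR}) is the tool that lets the $\gamma=1$ test go through, and I expect that to be where the real work lies.
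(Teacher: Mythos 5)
There is a genuine gap. You picked $x\in D\setminus C_n$ \emph{arbitrarily} at the outset, and then spent the rest of the proof trying to rule out the case $\gamma x + c = 0$. That is the wrong order of operations, and it leads you into the circular argument at the end: you eventually concede that Theorem~\ref{thm:SkipCheck} requires $P[\gamma x+c]$ for \emph{all} $\gamma\in\Gamma$, and your attempted fix --- ``the $\gamma x+c=0$ scenario cannot occur once we know $x\in C_i$ from the $\gamma=1$ case is already a contradiction'' --- uses the conclusion of Theorem~\ref{thm:SkipCheck} to verify its own hypothesis. Weakening the hypothesis of Theorem~\ref{thm:SkipCheck} to only $\gamma=1$ would require re-proving that theorem, which you do not do; as stated it needs all $\gamma\in\Gamma$, and over a ring (as opposed to a field) a zero-divisor $\gamma$ can indeed send $x$ into $C_n$ even when $x\notin C_n$, so the hypothesis for $\gamma\neq1$ is not automatic.

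The fix, and what the paper actually does, is to \emph{choose $x$ well before worrying about anything else}: apply Theorem~\ref{T-FLPIR} to split $\tilde{C}=C\oplus C'$ and then take an $x\in\tilde{C}\setminus C$ such that $\{a_{j_1},\dots,a_{j_k},x\}$ is linearly independent (e.g.\ a suitable element of the complement $C'$). With this choice, $rx+c\neq 0$ for \emph{every} $r\in R\setminus\{0\}$ and $c\in C$ --- by definition of linear independence, not by a case analysis on $\gamma$ --- so $P[rx+c]$ is non-vacuously true for all such $r,c$ simply because $\tilde{C}$ satisfies~$P$ on nonzero vectors. Then Theorem~\ref{thm:SkipCheck} applies at once, with its full hypothesis verified, giving $x\in C_i\subseteq C$, the desired contradiction. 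You identified essentially all the right ingredients (Theorem~\ref{T-FLPIR}, the direct-sum decomposition, extending a basis, Theorem~\ref{thm:SkipCheck}), but assembled them in the wrong order: the direct-sum structure must be used to \emph{select} $x$, not to analyze an $x$ already fixed.
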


\begin{proof}
By Theorems \ref{thm:FalseFree} and \ref{thm:PropertyHolds}, the module~$C$ is free with basis $\{a_{j_1},\ldots,a_{j_k}\}$,
and all nonzero codewords in~$C$ satisfy~$P$.
Suppose contrarily that there is some free linear code~$\tilde{C}$ with all nonzero codewords satisfying~$P$ and such that
$C\subsetneq\tilde{C}\subseteq R^n$.
By Theorem~\ref{T-FLPIR} there exists a submodule~$C'$ such that $C\oplus C'=\tilde{C}$.
Hence there exists some $x\in\tilde{C}\setminus C$ such that $\{a_{j_1},\ldots,a_{j_k},x\}$ is linearly independent.
Thus $r x+c\not=0$ for all $r\in R\setminus\{0\},\,c\in C$ and therefore $P[r x+c]$ is true for all these vectors.
Let $i\in\{1,\ldots,n\}$ such that $x\in V_i\setminus V_{i-1}$.
But then Theorem~\ref{thm:SkipCheck} tells us that $x\in C_i$, contradicting that $x\not\in C$.
\end{proof}

In Examples~\ref{ex:Z4Lee} and~\ref{ex:Z10homog} we illustrate the different outcomes of the algorithm when we toggle $P[0]$
between true and false.
In general, but not always, if~$P[0]$ is false one obtains a significantly smaller code.
More importantly, even though toggling $P[0]$ to true simply widens the selection criterion, the algorithm does not always
produce a lexicode that contains the lexicode for $P[0]$ being false.

The following result shows that with a suitable choice of the lexicographic ordering on~$R^n$ every free code satisfying some multiplicative property
can be obtained as a `partial lexicode`, that is, a code obtained when stopping the algorithm after a certain number of rounds.
In combination with the previous theorems this result may be used to test whether a given code is maximal among all codes satisfying the property or, if not, extend it to a maximal code.

\begin{theorem}\label{thm:AllCodesLexi}
Any free linear code $C\subseteq R^n$ satisfying some multiplicative property $P$ for all nonzero $x\in C$
is a subcode of a lexicode $C(<,B,P)$ for some suitable respectful ordering~$<$ on~$R$ and a suitable basis~$B$ of~$R^n$.
\end{theorem}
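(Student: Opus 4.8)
The plan is to build a respectful ordering and a basis so that the given code $C$ appears as one of the intermediate codes $C_k$ produced by Algorithm~\ref{alg:Greedy}, which immediately exhibits $C$ as a subcode of the final lexicode $C_n = C(<,B,P)$. Since $C$ is free, pick a basis $c_1,\ldots,c_k$ of $C$; by Theorem~\ref{T-FLPIR}, $C$ is a direct summand of $R^n$, so extend this to a basis $c_1,\ldots,c_k,b_{k+1},\ldots,b_n$ of the free module $R^n$. The basis $B$ for the algorithm will be this one, but \emph{listed in a carefully chosen order}: I want the algorithm, after $k$ rounds, to have selected precisely $c_1,\ldots,c_k$ (up to unit multiples) and thus to have $C_k = C$.

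First I would set up the ordering on $R$. Start from any respectful ordering $<_0$ (which exists by Theorem~\ref{thm:RespectfulExists}); in fact I will take one that respects a linear extension $L$ of the poset of left ideals, so that for $Rx \supsetneq Ry$ we get $x <_0 uy$ for all units $u$ with $\alpha = 1$ (as noted after Theorem~\ref{thm:RespectfulExists}). The key additional requirement is on the \emph{within-orbit} order: I want $1$ (and only then each other unit, in some order) to come first within the $R^\ast$-orbit of units, i.e.\ within the generators of $R = R\cdot 1$. More importantly, I want to choose the ordered basis so that in level $V_i \setminus V_{i-1}$ the vector whose top coefficient (with respect to the $i$-th basis vector) equals $1$ and whose lower coefficients are $0$ is the smallest admissible candidate. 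Concretely: order $B$ as $b_1 := c_1,\ b_2 := c_2,\ \ldots,\ b_k := c_k$, and then $b_{k+1},\ldots,b_n$ in any order. With the lexicographic ordering of Definition~\ref{def:LexiOrder} and a respectful $<$ having $1$ minimal among the unit-generators of $R$, the vector $b_i = c_i$ lies in $V_i \setminus V_{i-1}$ and its coefficient vector is $(0,\ldots,0,1)$ in the relevant coordinates, so $b_i$ is the \emph{first} element of its level set.

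Now run the algorithm. I claim that in round $i \le k$ the vector selected is $a_i = c_i$ (or a unit multiple, which generates the same submodule). Indeed, $b_i = c_i$ is the smallest vector in $V_i \setminus V_{i-1}$, and it is admissible: by induction $C_{i-1} = R\{c_1,\ldots,c_{i-1}\}$, and for every $\gamma \in \Gamma$ and $c \in C_{i-1}$ the vector $\gamma c_i + c$ is a nonzero-or-zero element of $C = R\{c_1,\ldots,c_k\}$; if it is nonzero, $P[\gamma c_i + c]$ holds because all nonzero codewords of $C$ satisfy $P$; if it is zero, then since $P$ is multiplicative and $P[0]$ has a fixed value either way, one needs $P[0]$ true — but here $P$ is multiplicative and the hypothesis ``$P$ for all nonzero $x \in C$'' leaves $P[0]$ free, so I would invoke Corollary~\ref{cor:TogglePzero} and Lemma~\ref{lem:GammaSufficient}: actually the cleaner route is that $\gamma c_i + c = 0$ would force $c_i \in C_{i-1}$, impossible since the $c_j$ are a basis, so this case does not arise. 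Hence $c_i$ is admissible and, being minimal, is selected; thus $C_i = Rc_i + C_{i-1} = R\{c_1,\ldots,c_i\}$. After $k$ rounds $C_k = C$. The algorithm then continues, producing $C_k \subseteq C_{k+1} \subseteq \cdots \subseteq C_n$, so $C = C_k \subseteq C_n = C(<,B,P)$, as claimed.

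The main obstacle I anticipate is precisely the verification that the ``obvious'' candidate $b_i = c_i$ is the \emph{smallest} admissible vector in its level set — this requires the ordering on $R$ to be arranged so that $1$ sits at the bottom of its $R^\ast$-orbit \emph{and} the coefficient $1$ beats every other nonzero ring element $r$ in the lexicographic comparison on the top coordinate; combined with respectfulness this is consistent (respectfulness only constrains the order \emph{between} orbits, and between orbits $R = R\cdot 1$ is the top ideal, whose generators — the units — must, by the respectful condition with $Rx \supsetneq Ry$ read the right way, come \emph{after} elements of proper ideals... ) — so I must be careful: respecting a linear extension puts bigger ideals' generators \emph{earlier} or \emph{later}? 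From Definition~\ref{def:Respectful}, $Rx \supsetneq Ry$ forces some generator of $Rx$ before all non-generators of $Rx$... this is a within-considerations statement; the between-orbit placement is governed by the linear extension, and I get to choose $L$, so I will simply choose $L$ with $R = (1)$ \emph{last} if that is what respectfulness of the induced order forces, and then within the coefficient comparison $1 < (\text{every non-unit})$ may fail. The honest fix is to instead order $B$ in \emph{reverse} and use the reversed-coefficient lexicographic comparison of~\eqref{e-SortLevel}, together with the freedom in choosing which generator of each orbit is smallest; a short case check along the lines of Example~\ref{E-RespOrder} and Example~\ref{ex:Z4NaturalOrder} confirms $c_i$ is minimal in $V_i \setminus V_{i-1}$. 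Tying down this minimality argument cleanly — rather than the bookkeeping with $\Gamma$, the module structure, or Theorem~\ref{thm:SkipCheck} — is the crux of the proof.
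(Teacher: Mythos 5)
Your plan matches the paper's proof exactly: extend a basis of the free code $C$ to a basis $B$ of $R^n$ via Theorem~\ref{T-FLPIR}, pick a respectful ordering beginning $0<1<\cdots$, and observe that the algorithm selects $b_i$ (the $i$-th basis vector) in round $i$ for $i\le k$, since it is both the first vector in its level set and admissible (because $rb_i+c$ is a nonzero element of $C$ for $r\neq 0$, $c\in C_{i-1}$). The resolution of your own final worry — whether $1$ can sit immediately after $0$ in a respectful ordering — is in fact automatic and you have the direction backwards: in the construction of Theorem~\ref{thm:RespectfulExists}, the ring order is defined by $x<y$ iff $Ry<_L Rx$, a reversal of the linear extension $L$. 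Since $R$ is the maximum of the ideal poset and hence the last element in any $L$, the generators of $R$ (the units) become the \emph{smallest} nonzero elements of $R$; placing $0$ first and $1$ first among the units gives exactly $0<1<\cdots$, and this is respectful. With that ordering, $b_i$ has coefficient vector $(0,\ldots,0,1)$ and beats every other vector in $V_i\setminus V_{i-1}$: at the highest index $j$ where a competitor $y$ differs, either $j=i$ and $1<y_i$ (as $1$ is the minimal nonzero element), or $j<i$ and $0<y_j$. Your proposed ``fix'' of re-reversing $B$ and re-examining Example~\ref{E-RespOrder}/\ref{ex:Z4NaturalOrder} is therefore unnecessary; drop it and the proof closes cleanly, coinciding with the paper's.
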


\begin{proof}
Since $C$ is free, it has some basis $\{b_1,\ldots,b_k\}$.
By Theorem~\ref{T-FLPIR}, we can extend this to a basis $B=\{b_1,\ldots,b_n\}$ of $R^n$.
Choose a respectful ordering on~$R$ starting with $0<1<\ldots\;$.
Running Algorithm~\ref{alg:Greedy} with this basis~$B$ and respectful ordering, the first vector in the level set $V_i\setminus V_{i-1}$
is~$b_i$.
Note that $r b_i+c\not=0$ for all $r\in R\setminus\{0\},\,c\in V_{i-1}$.
Thus, for $i\leq k$ we have that $P[r b_i+c]$ holds true (regardless of the value of $P[0]$),
and thus the algorithm selects $a_i=b_i$ for every $i=1,\ldots, k$.
Then $C=C_k\subseteq C(<,B,P)$.
\end{proof}

Based on an abundance of examples, we strongly believe that Theorem~\ref{thm:AllCodesLexi} is true for general (i.e., non-free) codes.
Unfortunately, we are not able to provide a proof at this point.

\section{Examples of Lexicodes}\label{SS-Exa}
We start with an example showing that the respectfulness of the ordering is necessary for Theorem~\ref{thm:SkipCheck} to be true, even over a finite chain ring.

\begin{ex}\label{E-Z4isotropic}
(a) Consider $\Z_4^4$ with the standard basis $B=\{1000,0100,0010,0001\}$ and the property $[P[x]\Longleftrightarrow x\cdot x=0]$.
Using the natural, thus respectful, ordering $0<1<2<3$, the selected vectors are $a_1=2000, a_2=0200, a_3=0020$, and $a_4=1111$.
The resulting lexicode $C_4=R\{a_1,a_2,a_3,a_4\}$
has cardinality~$32$. It is not free (because its cardinality is not a power of~$4$).

(b)
Consider now the non-respectful ordering $0<2<1<3$ on~$\Z_4$.
Using the same basis of~$\Z_4^4$ and the same property as in~(a),
the algorithm generates the lexicode~$C_4=R\{a_1,a_2,a_3,a_4\}$ of size 16 with selected vectors $a_1=2000, a_2=0200, a_3=0020$, and $a_4=0002$.
Observe that this code is strictly contained in the one from~(a).
The vector $x=1111\in V_4\setminus V_3$ satisfies $P[\gamma x+c]$ for all $\gamma\in \Gamma=\{1,2\}$ and $c\in C_3$ (the code from the previous iteration of the algorithm).
But $x$ is not in~$C_4$.
This is due to the fact that~$x$ cannot be written in the form $x=\beta a_4+v$ for
any $\beta\in R,\,v\in V_3$; see~\eqref{e-xRel} in the proof of Theorem~\ref{thm:SkipCheck}.
In other words, the vector 0002 was selected instead of 1111 (or some other vector with a unit in the last entry), which would not have happened with a respectful ordering.
\end{ex}

The following examples show that Theorem~\ref{thm:SkipCheck} is not true in general if either the property is not left
multiplicative or the ring is not a principal left ideal ring.

\begin{ex}\label{ex:counterex}
(a) Consider the field~$R=\Z_3=\{0,1,2\}$ with the natural order $0<1<2$, which is respectful.
Let~$P$ be the property $[P[x]\Longleftrightarrow x=2]$.
Then~$P$ is not multiplicative because $P[2\!\cdot\!2]\neq P[2]$.
In~$R^1$ with standard basis~$e_1=1$ the lexicode resulting from  Algorithm~\ref{alg:Greedy} is $C=\Z_3$.
It does not satisfy Theorem~\ref{thm:PropertyHolds}.
Note that due to the non-multiplicativity of~$P$ even Lemma~\ref{lem:GammaSufficient} is not true.

(b) Consider the ring $R:=\F_2[x,y]/(x^2,xy,y^2)=\{0,x,y,x+y,1,1+x,1+y,1+x+y\}$. Note that the last 4 elements are the units of~$R$.
The ring has 4 non-trivial ideals given by $(x),\,(y),\,(x+y),\,(x,y)$. The first three are principal and have cardinality~$2$,
the last one is not principal and has cardinality~$4$.
Based on this and Definition~\ref{D-HomogWeight} the homogeneous weight on~$R$ turns out to be
\[
  \omega(0)=0,\quad \omega(x)=\omega(y)=\omega(x+y)=2,\quad \omega(u)=\frac{1}{2}\text{ for all }u\in R^*.
\]
In~$R^1$ consider the multiplicative property $[P[x]\Longleftrightarrow \omega(x)\geq2\text{ or }x=0]$.
Moreover, consider any ordering of the ring elements\footnote{Note that the definition of respectfulness for an ordering is based
on principal  left ideals. If we simply ignore the non-principal ideal $(x,y)$ and follow Definition~\ref{def:Respectful}, then any ordering of the form
$0<1<\text{``rest''}$ may be called respectful.} and the standard basis $e_1=1$.
Then Algorithm~\ref{alg:Greedy} results in the lexicode $C=C_1=(w)=\{0,w\}$, where~$w$ is the first nonunit element in the ordering of~$R$.
As a consequence, Theorem~\ref{thm:SkipCheck} is not satisfied for $i=1$ because every element in $\{0,x,y,x+y\}$
satisfies the property.
For the same reason, Theorem~\ref{thm:TrueMaximal} is not satisfied.
All of this shows that for non-principal ideal rings, the search in Step~2.\ of Algorithm~\ref{alg:Greedy} should continue through each entire level $V_i\setminus V_{i-1}$.
\end{ex}

The next example illustrates that different respectful orderings may generate different codes.
Part~(b) shows that, for codes over fields, even the dimension of the resulting code
depends on the choice of the respectful ordering.

\begin{ex}\label{ex:DifferentOrderings}
(a)  Consider the reverse standard basis $B=\{001,010,100\}$ for $\Z_4^3$ and the selection property
       $[P[x]\Longleftrightarrow\wtL(x)\geq 2]$, where $\wtL$ is the Lee weight; see
       Definition~\ref{D-Weights}(d). Note that $P[0]$ is false.
       Since  $\Z_4^*=\{1,3\}$, a total ordering~$<$ on~$\Z_4$ is respectful iff $1<2$ or $3<2$.
       We obtain the following cases:
       \\
       (i) Using any of the respectful orderings $r_1<0<r_2<r_3$, where $r_1\in\Z_4^*$, we obtain the lexicode
        $C=\Z_4\{011,103\}$ (the two given vectors are not necessarily the vectors $a_i$ selected by the algorithm).
        \\
      (ii) With any of the respectful orderings $r_1<r_2<r_3<r_4$, where $\{r_1,r_2\}=\Z_4^*$, we obtain the lexicode
       $C=\Z_4\{011,102\}$.
       \\
       (iii)  With any other respectful ordering we obtain the lexicode $C=\Z_4\{011,101\}$.
       \\
       In each case the resulting lexicode has cardinality~$16$.

(b)
 Consider the field~$\F=\F_7$ and in~$\F^3$ define the codes $C=\F\{100,010\}$, $D=\F\{001\}$.
Let~$P$ be the property $[P[x]\Longleftrightarrow x\in C\cup D]$.
Note that~$P$ is  multiplicative and $P[0]$ is true.
Fix the ordered basis $B=\{113,331,100\}$ of~$\F^3$.
\\
i) Using the respectful ordering $0<1<2<3<4<5<6$ the algorithm returns $a_2=1(331)+2(113)=550$, thus $C_2=\F\{550\}$, and
$a_3=100$. Hence $C(<,B,P)=C$.
\\
ii) Using the respectful ordering $0<1<4<3<2<5<6$ the algorithm returns $a_2=1(331)+4(113)=006$, thus
$C_2=\F\{001\}$, and there is no return for $a_3$.
Thus $C(<,B,P)=D$.
\end{ex}

Of course, the output of the algorithm also depends on the choice of the basis~$B$.
Again, even the dimension of the lexicode (e.g., for field alphabets) depends on~$B$.
The choice of basis may also decide on whether the lexicode is free or not.
\begin{ex}\label{E-FieldDim}
(a)
Let~$\F$ be any finite field and in~$\F^3$ consider the two codes $C=\F\{100,010\}$ and $D=\F\{001\}$.
Let~$P$ be the property $[P[x]\Longleftrightarrow x\in C\cup D]$.
Fix any total ordering~$<$ on~$\F$.
Using the basis $B=\{100,010,001\}$, the greedy algorithm returns the code~$C(<,B,P)=C$, whereas with the basis
$B'=\{001,010,100\}$ it returns $C(<,B',P)=D$.

(b) Consider the codes $C=\Z_4\{200,020\},\,D=\Z_4\{001\}$ in the module $\Z_4^3$.
Using the same property as in~(a) and the standard basis of~$\Z_4^3$, the algorithm returns the non-free code~$C$, whereas with the reverse standard basis it returns the free code~$D$.
\end{ex}

We now illustrate the outcome of the greedy algorithm when toggling $P[0]$ between true and false.
\begin{ex}\label{ex:Z4Lee}
Consider any respectful ordering on~$\Z_4$ and the module $\Z_4^3$ with the standard basis.
Let~$P$ be the multiplicative property $[P[x]\iff \wtL(x)\geq 6]$, where $\wtL$ is again the Lee weight.

(a) The only vector in $\Z_4^3$ that satisfies~$P$ is $222$.
      But since $2\cdot 222=000$ and $P[000]$ is false, Algorithm \ref{alg:Greedy} returns the zero code.

(b) If we toggle $P[000]$ to true, then $222$ is selected and we obtain the non-free code $\{000,222\}$.
\end{ex}

\begin{ex}\label{ex:Z4dot}
We consider the exact situation of Example~\ref{E-Z4isotropic}(a) with the only difference that we toggle $P[0]$ to false.
Thus the property is $[P[x]\Longleftrightarrow x\cdot x=0 \text{ and }x\neq0]$.
Using the same respectful ordering and the same basis, Algorithm~\ref{alg:Greedy} returns the code $C=\Z_4\{1111\}$.
It is a free subcode of the lexicode returned in Example~\ref{E-Z4isotropic}(a).
\end{ex}

\begin{ex}\label{ex:Z10homog}
Consider $R:=\Z_{10}$ with the natural, thus respectful, ordering $0<1<\ldots<9$.
By Definition~\ref{D-HomogWeight} the homogeneous weight on~$R$ is given by
\begin{center}
\begin{tabular}{|c|cccccccccc|}
\hline
$x$ & 0 & 1 & 2 & 3 & 4 & 5 & 6 & 7 & 8 & 9 \\
\hline
$\omega(x)$ & $0$ & $\frac{3}{4}$ & $\frac{5}{4}$ & $\frac{3}{4}$ & $\frac{5}{4}$ & $2$ & $\frac{5}{4}$ & $\frac{3}{4}$ & $\frac{5}{4}$ & $\frac{3}{4}$ \\
\hline
\end{tabular}
\end{center}

(a) Consider now the multiplicative property $[P[x]\iff \omega(x)\geq 2]$ on the module $R^3$, where the homogeneous weight is extended
      to vectors as in~\eqref{e-ExtWeight}.
      Thus $P[0]$ is false.
      Using the ordered basis $B=\{001,010,100\}$ of $R^3$, Algorithm~\ref{alg:Greedy} returns
      $C_1=\{0\},\,C_2=C_3=R\{012\}$.
      Hence $C:=C(<,B,P)=C_3$ is indeed a free code with basis $\{012\}$ and cardinality~$10$.

(b) With the same data as in~(a), but where we toggle $P[0]$ to true, the algorithm returns 
      $C':=C_3=R\{005,021,201\}$.
      The code~$C'$ is not free and has cardinality~$50$.
      A minimal generating set is given by $\{201,820\}$.
      One should note that the code~$C$ from~(a), which is free, is not a subcode of~$C'$.
      In fact, $C\cap C'=\{0\}$, though all we did is toggle $P[0]$.
\end{ex}

Let us now turn to the construction of self-orthogonal codes.
Recall from Remark~\ref{rem:Isotr} that over a commutative ring with odd characteristic we achieve
self-orthogonality using the (multiplicative) property $[P[x]\Longleftrightarrow x\cdot x=0]$.
Obviously $P[0]$ is true.
In~(c) of the following example we provide a case where overwriting $P[0]$ to false produces a free code \emph{of the same size}
as the lexicode for the case where $P[0]=$ true.
The fact that we obtain in both cases ($P[0]$ true or false) lexicodes of the same size is remarkable because, more often than not,
codes generated with $P[0]=$ false are much smaller than their counterparts with $P[0]=$ true.

\begin{ex}
For all examples we consider the property $\big[P[x]\Longleftrightarrow x\cdot x=0\big]$.

(a) On $\F_5^4$ consider the reverse standard basis $B=\{e_4,e_3,e_2,e_1\}$  and
      fix the natural ordering $0<1<2<3<4$ on~$\F_5$.
      Then Algorithm~\ref{alg:Greedy} returns $C_1=\{0\},C_2=C_3=\F_5\{0012\}$ and $C(<,B,P)=C_4=\F_5\{0012,1200\}$,
      which by Theorem~\ref{thm:PropertyHolds} and Remark~\ref{rem:Isotr} is self-orthogonal, that is, $C\subseteq C^\perp$.
      Using $\dim(C)+\dim(C^{\perp})=n$ and $\dim(C)=2$ we conclude that $C=C^\perp$, that
      is,~$C$ is self-dual. This also shows that~$C$ is not a proper subcode of a code satisfying property~$P$ (thus illustrating
      Theorem~\ref{thm:TrueMaximal}).

(b) In the same way we obtain in $\F_7^4$ (using the natural ordering and the reverse standard basis)
      the self-dual code $C=\F_7\{0123,1035\}$.

(c) Over the ring~$\Z_9$ with the natural ordering and the reverse standard basis of $\Z_9^4$ we obtain the lexicode
      $C:=C(<,B,P)=\Z_9\{0003,0030,0300,3000\}$, which is not free and has~$81$ elements.
      If we reset $P[0]$ to false, we obtain the free lexicode $C':=\Z_9\{0114,1048\}$ of cardinality~$81$.
      From the identity $|C|\cdot|C^\perp|=9^4$ (see Remark~\ref{rem:Isotr}) we conclude that both codes are actually self-dual and thus
      maximally self-orthogonal.
\end{ex}

Over rings with even characteristic, Remark~\ref{rem:Isotr} is no longer sufficient, and
self-orthogonality cannot be described by a multiplicative property.
However, over the alphabet~$\Z_4$ it is known that if $C\subseteq\Z_4^n$ is a code such that
the Euclidean weight of each codeword $x\in C$ satisfies $\wtE(x)\equiv 0~\mod 8$ then~$C$ is self-orthogonal; see~\cite[Thm.~12.2.4]{HP03}.
These codes are known as self-orthogonal code of Type~II; see~\cite[p.~495]{HP03}.

All of this means that we can find self-orthogonal codes of Type~II in~$\Z_4^n$ using the multiplicative property
$[P[x]\Longleftrightarrow \wtE(x)\equiv 0~\mod 8]$.
This will in general not lead to maximal self-orthogonal codes because
self-dual codes of Type~II exist only if the length~$n$ is divisible by~$8$; see~\cite[Cor.~12.5.5]{HP03}.

\begin{ex}\label{ex:SOZ4}
Consider~$\Z_4^5$ with the property $[P[x]\Longleftrightarrow \wt_E(x)\equiv 0~\mod 8]$ and the reverse standard basis $B=\{e_5,\ldots,e_1\}$.
Using the natural ordering $0<1<2<3$ on~$\Z_4$ we obtain the lexicode $C(<,B,P)=\Z_4\{00022,00202,02002,20002\}$, which
has cardinality~$16$.
It is clearly contained in the self-dual code $C=\Z_4\{00002,00020,00200,02000,20000\}$, which is of Type~I (i.e., not of Type~II).
Many more examples of self-orthogonal lexicodes over~$\Z_4$, including self-dual codes of Type~II of length~8,
are given in~\cite[Table~2]{GGS14}.
\end{ex}

We briefly touch upon a selection property that arises in the context of DNA codes.

\begin{ex}
Consider~$\Z_4^4$ with the multiplicative property $[P[x]\Longleftrightarrow \wtU(x)\leq2]$; see Definition~\ref{D-Weights}(c).
Using the natural ordering~$<$ on~$\Z_4$ and the reverse standard basis~$B$ on~$\Z_4^4$, one obtains the lexicode
$C(<,B,P)=\Z_4\{0001,0010,0200,2000\}$, which has cardinality~$64$.
This idea could prove useful in constructing DNA codes with bounded GC-content, as discussed in~\cite{BGG15}, by suitably identifying the
elements of~$\Z_4$ with the~$4$ nucleotides $A,G,T,C$.
However, we wish to add that codes with \emph{constant} GC-content appear to be more useful for DNA computing as they guarantee a uniform hybridization process~\cite{MiKa06}.
These codes are clearly nonlinear and thus do not fall in the realm of this paper.
\end{ex}

We close the paper with an example over a noncommutative ring.
\begin{ex}
Let $R=M_2(\F_2)$ be respectfully ordered as in Example \ref{E-RespOrder}(c).
Consider~$R^3$ with the reverse standard basis $B=\{e_3,e_2,e_1\}$, thus
\[
  e_1=(I_2,0,0),\ e_2=(0,I_2,0),\ e_3=(0,0,I_2).
\]
We use the selection property $[P[x]\Longleftrightarrow\rankSum(x)\geq 2]$, see Example~\ref{D-Weights}(b).
Then Algorithm~\ref{alg:Greedy} produces the lexicode
$C(<,B,P)=R\{(0,I_2,I_2),\,(I_2,0,I_2)\}$,
which is free of dimension~$2$ (as it has to be according to Theorem~\ref{thm:FalseFree}), thus cardinality~$256$.

\end{ex}

\bibliographystyle{abbrv}

\end{document}